\documentclass[12pt]{article}%
\usepackage{amsfonts}
\usepackage[onehalfspacing]{setspace}
\usepackage{amsmath}
\usepackage{amssymb}
\usepackage{graphicx}
\usepackage{color}
\usepackage{float}
\usepackage[authoryear]{natbib}%
\setcounter{MaxMatrixCols}{30}
\providecommand{\U}[1]{\protect\rule{.1in}{.1in}}
\newtheorem{theorem}{Theorem}

\newtheorem{corollary}{Corollary}

\newtheorem{definition}[theorem]{Definition}

\newtheorem{proposition}{Proposition}
\newtheorem{remark}{Remark}

\newtheorem{step}{Step}[section]

\newenvironment{proof}[1][Proof]{\noindent\textbf{#1.} }{\ \rule{0.5em}{0.5em}}

\begin{document}

\title{False Narratives and Political Mobilization\thanks{Spiegler acknowledges
financial support from ERC Advanced Investigator grant no. 692995. We thank
Tuval Danenberg, Danil Dmitriev, Federica Izzo, Gilat Levy and Nathan Hancart
for helpful comments.}}
\author{Kfir Eliaz, Simone Galperti, and Ran Spiegler\thanks{Eliaz: Tel Aviv
University and the University of Utah. Galperti: UC San Diego. Spiegler: Tel
Aviv University and University College London.}}
\maketitle

\begin{abstract}
We present an equilibrium model of politics in which political platforms
compete over public opinion. A platform consists of a policy, a coalition of
social groups with diverse intrinsic attitudes to policies, and a narrative.
We conceptualize narratives as subjective models that attribute a commonly
valued outcome to (potentially spurious) postulated causes. When quantified
against empirical observations, these models generate a shared belief among
coalition members over the outcome as a function of its postulated causes. The
intensity of this belief and the members' intrinsic attitudes to the policy
determine the strength of the coalition's mobilization. Only platforms that
generate maximal mobilization prevail in equilibrium. Our equilibrium
characterization demonstrates how false narratives can be detrimental for the
common good, and how political fragmentation leads to their proliferation. The
false narratives that emerge in equilibrium attribute good outcomes to the
exclusion of social groups from ruling coalitions.

\end{abstract}

\section{Introduction}

Success in democratic politics requires the mobilization of public opinion,
which takes various forms: rallies, petitions, social media activism, and
ultimately voter turnout. Fluctuations in public opinion can determine which
policies get implemented and which coalitions of social groups form around
them (\cite{Burstein2003}). In turn, opinion makers (politicians, news
outlets, pundits) use past performance of policies and coalitions as raw
material for shaping public opinions. This paper is an attempt to shed light
on this interplay.

Our starting point is the idea that \textit{narratives} are a powerful tool
for mobilizing public opinion. This is a familiar idea with numerous
expressions in academic and popular discourse. After Senator John Kerry lost
the 2004 presidential elections, his political strategist Stanley Greenberg
said that \textquotedblleft a narrative is the key to
everything\textquotedblright\ and that Republicans had \textquotedblleft a
narrative that motivated their voters.\textquotedblright\footnote{See William
Safire's New York Times article titled \textquotedblleft
Narrative\textquotedblright%
\ (https://www.nytimes.com/2004/12/05/magazine/narrative.html).}
\cite{SJM2011} write: \textquotedblleft Policy narratives are the lifeblood of
politics. These strategically constructed `stories' contain predictable
elements and strategies whose aim is to influence public opinion toward
support for a particular policy preference.\textquotedblright

We build on the approach of \cite{ES2020}, who proposed to view political
narratives as \textit{causal models }that attribute public outcomes (such as
national security, economic growth, or the climate) to other factors (such as
policies or external variables). A narrative generates a probabilistic belief
by being \textquotedblleft estimated\textquotedblright\ against the observed
correlation between the outcome and its postulated causes. The stronger this
correlation, the stronger the belief that the narrative generates, and hence
the extent to which the narrative mobilizes agents behind a political
platform. This means that competition among social groups for political power
is to some extent a battle between conflicting narratives over what drives
public outcomes. Under this modeling approach, a \textit{false narrative}
corresponds to a \textit{misspecified} causal model. It can produce wrong
beliefs because it imposes an incorrect causal meaning on the correlation it highlights.

While \cite{ES2020} assumed a representative agent, this paper considers a
heterogenous society consisting of multiple social groups. We think of a
social group as a collection of agents having shared ideological,
socio-economic or ethnic/religious characteristics, as well as a distinct
political representation. For example, society can be divided into a left and
a right---each having its own representative political party. Another example
is the distinction between Flemish and French parties in Belgium, or the
various religious parties in Israel. The departure from a representative-agent
environment means that narratives can now also attribute outcomes to which
social groups are in power. This generates new insights into the role of
various kinds of false narratives in determining the structure of ruling
coalitions---namely, which groups are represented in government---and
ultimately the implemented policies.

Our model makes the stark, simplifying assumption that policies are the
\textit{only true cause} of public outcomes. However, ideological differences
between social groups will naturally give rise to correlations between the
structure of ruling coalitions, the policies they implement, and these
policies' resulting outcomes. A false narrative can exploit this correlation
and causally attribute the outcome \textit{solely} to social groups' power
status (i.e., whether they belong to the ruling coalition), even though in
reality this correlation is due to confounding by the implemented policies.

For illustration, suppose that a certain coalition $C$ systematically refrains
from taxing wealth. As a result, social inequality will tend to rise when $C$
is in power. A rival coalition $C^{\prime}$ may exploit this correlation and
spin a false narrative that in order to reduce inequality, we only need to
keep the social groups in $C$ out of power. Because this narrative does not
attribute the outcome to its true cause (namely, tax policy), it enables
$C^{\prime}$ to gain support: On the one hand, $C^{\prime}$ acts exactly like
$C$ by not proposing an unpopular wealth tax; on the other hand, it can claim
that by elbowing out $C$ it is doing something to lower inequality, which $is$
popular. Thus, in a sense, $C^{\prime}$ uses $C$ as a \textquotedblleft
scapegoat.\textquotedblright\ Our main objective in this paper is to
understand how such false narratives can gain ascedance, what form they may
take, and how they shape public policies and ruling coalitions.

In our setting, a \textit{political platform} consists of a policy, a
coalition of social groups, and a narrative. Given a long-run joint empirical
distribution over prevailing platforms and public outcomes, different
narratives may induce conflicting beliefs regarding the consequences of
policies and coalitions. Moreover, changes in long-run frequencies may change
the beliefs induced by a narrative, and therefore the extent to which it can
mobilize a social group. We define an \textit{equilibrium} as a probability
distribution over platforms, such that every platform in its support maximizes
the total mobilization of the social groups belonging to the platform's coalition.

This definition captures in reduced form the idea that a platform's success
depends on the strength of its popular support (in terms of the number and
size of social groups that rally behind it, and the intensity with which they
do so). It does so in the spirit of competitive equilibrium, as in
\cite{rothschild1976equilibrium}. The backstory is that there is
\textquotedblleft free entry\textquotedblright\ of office-motivated political
entrepreneurs who propose policy-narrative combinations. If a particular
combination attracts stronger support than the current combination, it will
overthrow it, such that the platform that eventually prevails is the one that
maximizes total support. We present a convergence result that provides a
dynamic foundation for the equilibrium concept.

In line with the competitive-equilibrium-like approach, we do not offer an
explicit model of the political process. In particular, there will be no
explicit game between platforms. This is in keeping with our interest in
competition over public opinion in a broad sense, rather than in specific
electoral procedures or post-election coalitional bargaining. However, in the
case of two social groups, our equilibrium concept is consistent with a
two-party voting model with endogenous voter turnout.

Using this formalism, we obtain several qualitative insights. First, in
addition to the true narrative that attributes outcomes to policies, two types
of false narratives emerge in equilibrium. The first type is a
\textquotedblleft denialist\textquotedblright\ narrative that attributes
outcomes to external forces outside society's control. The other type is a
\textquotedblleft tribal\textquotedblright\ narrative that attributes a good
public outcome to the \textit{exclusion} of some social groups from the ruling
coalition. In a political speech or a social-media post, such a narrative
could appear as \textquotedblleft national security was in good shape when the
Left was out of power.\textquotedblright\ 

Recent public debates over rising inflation are suggestive of these kinds of
narratives: Much of the action in these debates involves competing claims over
the causes of high inflation. Some narratives attributes it to government
actions (fiscal expansion), others to external factors (global supply-chain
disruptions), and yet others assign credit or blame for the level of inflation
solely to the party in power, without attempting to link this to the party's
policy decisions. A selection of press quotes demonstrates the form that these
conflicting narratives take:\medskip

\begin{quote}
\textquotedblleft As prices have increased\,...\,some Democrats have landed on
a new culprit: price gouging\,...\,For Democrats, it is a convenient
explanation as inflation turns voters against President Biden. It lets
Democrats deflect blame from their pandemic relief bill, the American Rescue
Plan, which experts say helped increase prices.\textquotedblright%
\footnote{https://www.nytimes.com/2022/06/14/briefing/inflation-supply-chain-greedflation.html}%
\medskip

\textquotedblleft Democrats have blamed supply chain deficiencies due to
COVID-19, as well as large corporations and monopolies.\textquotedblright%
\footnote{https://fivethirtyeight.com/features/what-democrats-and-republicans-get-wrong-about-inflation/}%
\medskip

\textquotedblleft As the midterm elections draw nearer, a central conservative
narrative is coming into sharp focus: President Biden and the
Democratic-controlled Congress have made a mess of the American
economy.\textquotedblright%
\footnote{https://www.nytimes.com/2022/06/11/opinion/fed-federal-reserve-inflation-democrats.html}%
\medskip
\end{quote}

The distinction between a false narrative that attributes outcomes to whoever
is in power and a more accurate narrative that attributes outcomes to policies
is also alluded to by Paul Krugman in a recent article about the politics of
inflation:\medskip

\begin{quote}
\textquotedblleft...\,voters aren't saying, \textquotedblleft Trimmed mean
P.C.E. inflation is too high because fiscal policy was too
expansionary.\textquotedblright\ They're saying, \textquotedblleft Gas and
food were cheap, and now they're expensive\,...\textquotedblright\ So when
people say --- as they do --- that gas and food were cheaper when Donald Trump
was president, what do they imagine he could or would be doing to keep them
low if he were still in office?\textquotedblright%
\footnote{https://www.nytimes.com/2022/06/02/opinion/inflation-biden.html. See
also \cite{Weaver2013} and \cite{SHZ2017}.}\medskip
\end{quote}

We wish to emphasize that we do not argue that our specific model matches the
inflation scenario. Nevertheless, we believe it offers an insight into the
interplay between the popularity of various types of false narratives, and the
objective statistical reality that both feeds the narratives and gets shaped
by them through the policy choices that different narratives promote.

Our second qualitative insight is that the false narratives that are employed
in equilibrium sustain a policy that would not be taken if the only prevailing
narrative were a true one (which correctly attributes outcomes to policies).
The function of false narratives is to resolve the cognitive dissonance
between the objective ineffectiveness of a policy for the common good and that
same policy's intrinsic appeal for a social group (e.g., based on costs or
benefits which affect that group). This is achieved by deflecting
responsibility for the public outcome from its true cause to spurious causes.
Moreover, when society's structure of political representation is more
fragmented and \textquotedblleft tribal\textquotedblright\ (in a sense we make
precise in our model), false narratives proliferate and lead to further
crowding out of the true narrative and the policy it justifies.

Finally, we characterize the structure of coalitions that form in equilibrium.
False narratives can give rise to coalitions that would not form if only the
true narrative prevailed. In particular, when a political platform employs a
\textquotedblleft tribal\textquotedblright\ narrative, it may
\textquotedblleft scapegoat\textquotedblright\ social groups that support that
platform's policy (indeed, they implement the same policy when they are in
power), yet their exclusion from the platform's coalition is necessary for the
narrative's effectiveness. We also perform comparative statics with respect to
the polarization of attitudes toward policies in society. We show that greater
polarization can be detrimental for the common good through the proliferation
of false narratives.

\section{A Model}

We begin by introducing the building blocks of our model.\medskip

\noindent\textit{Policies and outcomes}

\noindent Our model examines public-opinion battles over a single salient
issue, defined by a \textit{public outcome} variable $y\in\{0,1\}$. Assume
there is a social consensus that $y=1$ is a desirable outcome, and refer to it
accordingly as the \textquotedblleft good\textquotedblright\ outcome. For
example, the issue can be economic growth such that $y=1$ represents high
growth. Let $a\in A=\{\ell,h\}$ be a \textit{policy}. Policies cause outcomes
according to the following objective conditional probability distribution:%
\begin{equation}
\Pr(y=1\mid a)=\left\{
\begin{array}
[c]{ccc}%
q & \text{if} & a=h\\
0 & \text{if} & a=\ell
\end{array}
\right.  \label{Pr(y|a)}%
\end{equation}
where $q\in(0,1]$. \medskip

\noindent\textit{Social groups and coalitions}

\noindent Let $N=\{1,...,n\}$ be a set of \textit{social groups}. A
\textit{coalition} is a non-empty, strict subset of groups $C\subset N$. Note
that this rules out only the possibility of a ``grand coalition'' $C=N$
including \textit{all} groups (which can be achieved via other more primitive
assumptions as we do in Section~\ref{sec: fragmented_societies}).\medskip

The following notation will be useful. Let $x=(x_{0},...,x_{n})$ be a profile
of binary variables, where $x_{0}\in\{\ell,h\}$ and $x_{i}\in\{0,1\}$ for
every $i>0$. For every $S\subset\{0,...,n\}$, denote $x_{S}=(x_{i})_{i\in S}$.
Define the following function that assigns values of $x$ to policy-coalition
pairs: for every $(a,c)$, $x_{0}(a,C)=a$; and for every $i>0$, $x_{i}(a,C)=1$
if and only if $i\in C$. For $i>0$, we will refer to $x_{i}$ as group $i$'s
\textquotedblleft\textit{power status}.\textquotedblright\medskip

\noindent\textit{Narratives}

\noindent We define a narrative as a subset $S\subseteq\{0,...,n\}$, which is
the set of indices of the components of $x$. The set $S$ defines the
components to which the narrative attributes the public outcome $y$. For
example, $S=\{0,2\}$ means that the narrative attributes the outcome to the
policy and the power status of social group $2$. Given a long-run joint
probability distribution $p$ over $(x,y)$---which we endogenize below---a
narrative $S$ generates a probabilistic belief over the outcome conditional on
its postulated causes, given by $(p(y\mid x_{S}))$.\footnote{We use the
abbreviated notation $(p(y\mid x_{S}))=(p(y\mid x_{S}))_{x_{S},y}$.} In other
words, the narrative draws attention to the correlation between $y$ and
$x_{S}$ and imposes a causal meaning on this correlation.

Two narratives have a special status in our model. First, $S=\{0\}$ is the
\textquotedblleft\textit{true narrative}\textquotedblright\ because it
attributes $y$ to its sole true cause $a$. Every other narrative is false in
the sense that it attributes $y$ to wrong causes. Second, $S=\varnothing$ is a
\textquotedblleft\textit{denialist}\textquotedblright\ narrative that does not
attribute $y$ to any of the endogenous variables. In particular, it claims
that the policy has no consequences. We interpret the denialist narrative as
the attribution of outcomes to external variables.

A third type of narrative will also play a key role in our model. These are
narratives $S\subseteq N$, where $S\neq\varnothing$. We refer to such
narratives as \textquotedblleft\textit{tribal}\textquotedblright\ because they
attribute outcomes to the power status of social groups, without mentioning
policies.\medskip

\noindent\textit{Platforms}

\noindent We refer to policy-coalition-narrative triples $(a,C,S)$ as
\textit{platforms}. These are the objects that vie for popularity in our
model. Given an objective long-run probability distribution $\sigma$ over
platforms $(a,C,S)$, we can define an induced joint distribution over
$(a,C,S,y)$ that incorporates expression (\ref{Pr(y|a)}):%
\[
p_{\sigma}(a,C,S,y)=\sigma(a,C,S)\cdot\Pr(y\mid a).
\]
Hereafter, we denote the support of $\sigma$ by $Supp(\sigma)$. \medskip

\noindent\textit{Mobilization}

\noindent The extent to which a platform can mobilize a social group depends
on two factors: the intensity of the group's belief that the platform induces
a good public outcome and the group's intrinsic attitude to the platform's
policy. Specifically, let $f_{i}:A\rightarrow%
\mathbb{R}
_{+}$ be a function that measures group $i$'s intrinsic attitude towards each
policy. Denote $f=(f_{i})_{i\in N}$. We refer to $f_{i}(a)$ as group $i$'s
\textquotedblleft\textit{mobilization potential}\textquotedblright\ for $a$.
Intuitively, $f_{i}(a)$ reflects the specific consequences (such as costs)
that policy~$a$ has for a group $i$ and so its raw support for~$a$, besides
its general interest in the public outcome.

Denote $N^{a}=\{i\in N\mid f_{i}(a)>0\}$---i.e., this is the set of groups
having positive mobilization potential for $a$. To avoid trivial cases, we
assume that for every group $i\in N$, $f_{i}(a)>0$ for least one policy $a$,
and that for every $a$, $f_{i}(a)>0$ for some $i\in N$.\medskip

\begin{definition}
[Platform Support]Fix a probability distribution $\sigma$ over platforms.
Then, group $i$'s support for a platform $(a,C,S)$ is defined by%
\begin{equation}
u_{i,\sigma}(a,C,S)=p_{\sigma}(y=1\mid x_{S}(a,C))\cdot f_{i}(a)
\label{platform support}%
\end{equation}

\end{definition}

Our notion of a social group's support for a platform takes a broad view of
political mobilization to include not only voting, but also other kinds of
political participation: rallies, petitions, protests, or social media
activism. We assume that the actual mobilization of group $i$ is
\textit{proportional} to its mobilization potential for the platform's policy,
as well as to the belief---shaped by the narrative---that the platform induces
a good outcome. The stronger a group's belief, the stronger its support for
the platform. In Section~\ref{subsec: microfoundation}, we provide a
\textquotedblleft microfoundation\textquotedblright\ that substantiates the
multiplicative form of $u_{i,\sigma}$.

The reason why $u_{i,\sigma}$ is indexed by $\sigma$ is that the belief
$p_{\sigma}(y=1\mid x_{S})$ may be sensitive to changes in $\sigma$. To see
why, recall that $y$ is purely a (probabilistic) function of $a$, so $y$ is
independent of $C$ conditional on $a$. This property can be represented by the
directed acyclic graph (DAG) $C\leftarrow a\rightarrow y$ (the direction of
the link between $C$ and $a$ is arbitrary; what matters is that they are
correlated because they are jointly determined, as we will describe below). If
the narrative $S$ does not attribute $y$ to $a$---i.e., $S\subseteq N$---it
amounts to interpreting the long-run correlation between $C$ and $y$ as if it
is causal, namely as if the DAG were $C\rightarrow y$. In reality, the
correlation is due to confounding, since $a$ is correlated with both $y$ and
$C$. The latter correlation is determined by $\sigma$ as the following
expression makes evident:%
\begin{equation}
p_{\sigma}(y=1\mid x_{S})=\sum_{a,C}p_{\sigma}(a,C\mid x_{S})p(y=1\mid a),
\label{p(y|X_S)}%
\end{equation}
where the term $p_{\sigma}(a,C\mid x_{S})$ is determined by $\sigma$.

To illustrate (\ref{p(y|X_S)}) and its role in how a false narrative can
induce a wrong belief, suppose $S=\{i\}$ and $i\notin C$ for some $i\in N$.
Then,%
\[
p_{\sigma}(y=1\mid x_{S}(\ell,C))=\frac{q\sum_{C,S\mid i\notin C}%
\sigma(h,C,S)}{\sum_{a,C,S\mid i\notin C}\sigma(a,C,S)}%
\]
If $\sigma(h,C,S)>0$ for some platforms in which $i\notin C$, then $p_{\sigma
}(y=1\mid x_{S}(\ell,C))>0$. This means that platform $(\ell,C,S)$ will
receive positive support (as long as $C$ includes some $j$ such that
$f_{j}(\ell)>0$), even though it objectively leads to $y=0$ with
certainty.\medskip

\noindent\textit{Admissible coalitions}

\noindent We impose a restriction on feasible coalitions, which is based on a
qualitative distinction between $f_{i}(a)>0$ and $f_{i}(a)=0$. Suppose group
$i$ is in fact intrinsically \textit{opposed} to policy $a$. Then, a coalition
that includes group~$i$ and advocates $a$ benefits from ousting that group
because it would act as a \textquotedblleft fifth column.\textquotedblright%
\ But of course group~$i$ would never join this coalition in the first place.
Therefore, we interpret $f_{i}(a)=0$ as an assumption that group~$i$ is
opposed to $a$ and, hence, will never be in a coalition that advocates it. By
assumption, the group satisfies $f_{i}(a^{\prime})>0$ for $a^{\prime}\neq a$,
which means that it could join coalitions that advocate $a^{\prime}$. In this
sense, rallying in favor of $a^{\prime}$ is like rallying against $a$.

Formally, given policy $a$, a coalition $C$ is\textit{ admissible} if
$f_{i}(a)>0$ for all $i\in C$. Hereafter, we consider only platforms involving
admissible coalitions.\medskip

\noindent\textit{Equilibrium}

\noindent The feedback between $\sigma$ and political mobilization impels us
to introduce an \textit{equilibrium} notion of dominant platforms.\medskip

\begin{definition}
[Equilibrium]\label{def: equilibrium} A probability distribution $\sigma$ with
full support over platforms with admissible coalitions is an $\varepsilon
$-equilibrium if whenever $\sigma(a,C,S)>\varepsilon$, $(a,C,S)$ maximizes%
\begin{equation}
U_{\sigma}(a,C,S)=\sum_{i\in C}u_{i,\sigma}(a,C,S) \label{eq}%
\end{equation}
A probability distribution $\sigma$ (not necessarily with full support) is an
equilibrium if it is the limit of $\varepsilon$-equilibria as $\varepsilon
\rightarrow0$.\medskip
\end{definition}

\noindent We start from the notion of $\varepsilon$-equilibrium to ensure that
$p_{\sigma}(y=1\mid x_{S})$ is always well-defined, given that groups rely on
long-run data to form beliefs. This \textquotedblleft trembling
hand\textquotedblright\ aspect plays a very limited role in our analysis.

Our equilibrium notion captures the idea that a platform's political strength
depends on how strongly it mobilizes its constituent groups. We view
narrative-fueled political competition as a battle over public opinion. A
platform is \textit{dominant} given $\sigma$ if it generates the largest
aggregate mobilization of its coalition members. Therefore, the only
\textit{dominant} platforms in equilibrium are those that maximize (\ref{eq}),
to which we refer as the \textit{platform's payoff}. When $(a,C,S)$ is
dominant, we say that $C$ is a \textit{ruling coalition}.

Note that if only the true narrative $S=\{0\}$ were feasible, then a platform
with $a=\ell$ would generate zero payoffs by (\ref{Pr(y|a)}) and the
definition of~$u_{i,\sigma}$. Instead, a platform with $a=h$ always generates
positive payoffs. In this case, $a=h$ would occur with probability one in
equilibrium. We therefore refer to $a=h$ as the \textquotedblleft%
\textit{rational}\textquotedblright\ policy.

\section{Two-Group Societies}

We begin our analysis with the simple case of $n=2$, where the only feasible
coalitions are $\{1\}$ and $\{2\}$. We assume that the mobilization potential
function satisfies $f_{1}(h)>f_{2}(h)$ and $f_{2}(\ell)>f_{1}(\ell)$. That is,
policy~$h$ ($\ell$) receives stronger raw support from group $1$ ($2$).

This specification of our model is akin to a two-party system, in which
exactly one party can be in power at any point in time. In this case, our
equilibrium concept can be interpreted in terms of a two-party voting model:
Supporters of each party vote non-strategically for it if their net
anticipatory utility from their party's platform is positive---otherwise they
abstain (somewhat as in \cite{razin2021misspecified}). We elaborate on this
connection in Section 6.1.

The following are a few real-life examples of outcomes and policies that we
have in mind. First, the issue is climate change and $a=h$ represents carbon
taxation, which produces a common environmental benefit but induces
differential costs among social groups (captured by $f$). Second, the issue is
economic growth, where $a=h$ represents policies such as deregulation or other
structural reforms that foster growth but inflict differential adjustment
costs across society. Finally, the issue is national security, where $a=h$
represents an aggressive military strategy that mitigates security threats but
involves sacrifices and moral judgments that vary across groups.

This simple setting allows us to reduce the set of relevant narratives. Since
$x_{1}=1$ if and only if $x_{2}=0$, all tribal narratives $S\subseteq N$ are
equivalent: They effectively say that \textquotedblleft\textit{things are good
when group }$i$\textit{ is in power / group }$j$\textit{ is not in
power}.\textquotedblright\ In addition, all narratives that weakly contain
$\{0\}$ are equivalent, because $\Pr(y=1\mid a,C)\equiv\Pr(y=1\mid a)$. Thus,
every feasible narrative is equivalent to one of the following three
narratives: the true narrative $\{0\}$, the denialist narrative $\varnothing$,
and the tribal narrative $\{1\}$ (or, equivalently, $\{2\}$). Therefore,
hereafter we assume that only these three narratives---which we denote by
$true$, $denial$, and $tribal$ for expositional clarity---are
feasible.\medskip

\begin{proposition}
\label{prop n=2}There is a unique equilibrium distribution over $(a,C)$. The
only platforms that can be in~$Supp(\sigma)$ are $(h,\{1\},true)$,
$(\ell,\{2\},denial)$, and $(\ell,\{1\},tribal)$. Furthermore:$\medskip
$\newline\noindent$(i)$ $\sigma(h,\{1\},true)=\min\left\{  1,f_{1}%
(h)/f_{2}(\ell)\right\}  $;$\medskip$\newline\noindent$(ii)$ $\sigma
(\ell,\{1\},tribal)>0$ only if $\sigma(\ell,\{2\},denial)>0$.$\medskip$
\end{proposition}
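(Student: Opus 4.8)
The plan is to reduce everything to the policy frequencies implied by $\sigma$ and then decide, platform by platform, which triples can attain the equilibrium payoff $V$. Write $\mu=\sum_{C,S}\sigma(h,C,S)$ for the frequency of policy $h$ and $\Pr_\sigma(h\mid\{i\})$ for its frequency conditional on coalition $\{i\}$. By (\ref{p(y|X_S)}) the true narrative induces belief $q$ at $a=h$ and $0$ at $a=\ell$; the denialist narrative induces $q\mu$ for every $(a,C)$; and the tribal narrative induces $q\Pr_\sigma(h\mid\{i\})$ on a platform with coalition $\{i\}$. Hence the payoff of $(a,\{i\},S)$ is the associated belief times $f_i(a)$, and since coalitions are singletons $U_\sigma$ reduces to one such product. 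Throughout I would reason on an $\varepsilon$-equilibrium and pass to the limit, using full support to keep the relevant frequencies strictly inside $(0,1)$ where needed.

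First I would show $V=qf_1(h)$. Since $(h,\{1\},true)$ has the $\sigma$-independent payoff $qf_1(h)$, we have $V\ge qf_1(h)$. Conversely, every $h$-platform has payoff at most $qf_i(h)\le qf_1(h)$ (beliefs are at most $q$ and $f_1(h)>f_2(h)$), so if $V>qf_1(h)$ no $h$-platform is a maximizer and $\mu\to0$; as the total $h$-mass $\mu$ vanishes, the denialist belief $q\mu$ and every tribal belief $q\Pr_\sigma(h\mid\{i\})=q\Pr_\sigma(h,\{i\})/\Pr_\sigma(\{i\})$ vanish on any supported coalition (whose mass is bounded below), leaving $(h,\{1\},true)$ as the only positive-payoff platform and contradicting $V>qf_1(h)$. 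With $V=qf_1(h)$ fixed, every coalition-$\{2\}$ platform carrying $h$ has payoff at most $qf_2(h)<V$, while $(h,\{1\},denial)$ and $(h,\{1\},tribal)$ have payoffs $q\mu f_1(h)$ and $q\Pr_\sigma(h\mid\{1\})f_1(h)$, which full support keeps strictly below $qf_1(h)$ (as $\mu<1$ always, and $\Pr_\sigma(h\mid\{1\})<1$ whenever $(\ell,\{1\},\cdot)$ is admissible). Thus in the limit $h$ is carried only by $(h,\{1\},true)$, so $\Pr_\sigma(h,\{2\})\to0$ and $\mu=\sigma(h,\{1\},true)$.

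Next I would dispose of the $\ell$-platforms. The platform $(\ell,\{1\},denial)$ shares the belief $q\mu$ with $(\ell,\{2\},denial)$ but carries the smaller coefficient $f_1(\ell)<f_2(\ell)$, so its payoff lies strictly below $V$; and $(\ell,\{2\},tribal)$ cannot be a maximizer, since a maximizing tribal platform on coalition $\{2\}$ would force $\Pr_\sigma(h\mid\{2\})=f_1(h)/f_2(\ell)>0$ together with $\Pr_\sigma(\{2\})>0$, contradicting $\Pr_\sigma(h,\{2\})\to0$. This leaves $(h,\{1\},true)$, $(\ell,\{2\},denial)$ and $(\ell,\{1\},tribal)$ as the only possible support. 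The crux—and the step I expect to be the main obstacle—is to show that any $\ell$-mass forces $(\ell,\{2\},denial)$ into the support, which yields both part (ii) and the value of $\mu$. I would argue by contradiction: among coalition-$\{2\}$ platforms only $(\ell,\{2\},denial)$ can be a maximizer, so if it is absent then $\Pr_\sigma(\{2\})\to0$ and all mass sits on coalition $\{1\}$, giving $\mu=\Pr_\sigma(h\mid\{1\})$; if in addition $(\ell,\{1\},tribal)$ is a maximizer, its condition $q\Pr_\sigma(h\mid\{1\})f_1(\ell)=qf_1(h)$ yields $\mu=f_1(h)/f_1(\ell)$, whence $(\ell,\{2\},denial)$ would have payoff $q\mu f_2(\ell)=qf_1(h)f_2(\ell)/f_1(\ell)>qf_1(h)=V$ because $f_2(\ell)>f_1(\ell)$, contradicting maximality of $V$. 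Hence $(\ell,\{2\},denial)$ must be in the support—this is part (ii)—and its maximizer condition $q\mu f_2(\ell)=qf_1(h)$ gives $\mu=f_1(h)/f_2(\ell)$ whenever there is $\ell$-mass, with $\mu=1$ otherwise. Since requiring $(\ell,\{2\},denial)$'s payoff not to exceed $V$ reads $\mu\le f_1(h)/f_2(\ell)$, $\ell$-mass is present exactly when $f_1(h)<f_2(\ell)$, and in all cases $\sigma(h,\{1\},true)=\mu=\min\{1,f_1(h)/f_2(\ell)\}$, which is part (i).

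Finally, for uniqueness of the distribution over $(a,C)$ I would note that $\Pr_\sigma(h,\{1\})=\mu$ is now fixed, and that requiring $(\ell,\{1\},tribal)$'s payoff $q\Pr_\sigma(h\mid\{1\})f_1(\ell)$ not to exceed $V$ rewrites as $\Pr_\sigma(\{1\})\ge f_1(\ell)/f_2(\ell)$; since coalition-$\{1\}$ $\ell$-mass can only sit on the maximizer $(\ell,\{1\},tribal)$, this bound binds, pinning $\Pr_\sigma(\ell,\{1\})$ and hence $\Pr_\sigma(\ell,\{2\})$ uniquely. Two remarks close the argument: the trembling-hand feature is used only to break the knife-edge ties $\mu=1$ or $\Pr_\sigma(h\mid\{1\})=1$ under which $(h,\{1\},denial)$ or $(h,\{1\},tribal)$ would share the payoff $V$—in those degenerate cases (e.g.\ $f_1(\ell)=0$) the offending platforms are narrative-equivalent to $(h,\{1\},true)$ and leave the $(a,C)$ distribution untouched; and existence follows by exhibiting $\varepsilon$-equilibria with the marginals above and checking, via the payoff comparisons already made, that their limit is an equilibrium.
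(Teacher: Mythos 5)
Your proposal is correct and follows essentially the same route as the paper's proof: eliminate all platforms except $(h,\{1\},true)$, $(\ell,\{2\},denial)$ and $(\ell,\{1\},tribal)$, pin the equilibrium payoff at $q f_{1}(h)$, and solve the resulting indifference conditions, with the tribal-requires-denial step established by the same contradiction (if only coalition $\{1\}$ carries mass, the denial platform on $\{2\}$ would strictly outperform the tribal one since $f_{2}(\ell)>f_{1}(\ell)$). The only difference is organizational---you work with the aggregate policy frequency and the equilibrium value up front, whereas the paper proceeds by stepwise elimination followed by an explicit case analysis---but the substance is identical.
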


\noindent We present the proof of this result here, as it illustrates the
basic logic and forces at the core of our model. All other proofs are
relegated to the Appendix.$\medskip$

\paragraph{Proof of Proposition~\ref{prop n=2}}

\noindent We begin by writing the payoff $U_{\sigma}$ for the platforms
carried by the three relevant narratives:%
\begin{align*}
U_{\sigma}(a,\{i\},true)  &  =p_{\sigma}(y=1\mid a)\cdot f_{i}(a)\medskip\\
U_{\sigma}(a,\{i\},denial)  &  =p_{\sigma}(y=1)\cdot f_{i}(a)=q\cdot
p_{\sigma}(a=h)\cdot f_{i}(a)\medskip\\
U_{\sigma}(a,\{i\},tribal)  &  =p_{\sigma}(y=1\mid x_{i}=1)\cdot
f_{i}(a)\medskip
\end{align*}
The proof proceeds in steps.\medskip

\begin{step}
$(i)$ If $\sigma(a,\{i\},true)>0$, then $a=h$ and $i=1$. $(ii)$ If
$\sigma(h,\{i\},S)>0$, then $S=true$.
\end{step}

\begin{proof}
Consider any $\varepsilon$-equilibrium $\sigma$. Note that $p_{\sigma}(y=1\mid
a=\ell)=0$ and $p_{\sigma}(y=1\mid a=h)=q$. It follows that if $\sigma
(a,\{i\},true)>\varepsilon$ and hence $(a,\{i\},true)$ maximizes $U_{\sigma}$,
then $a=h$ and $i=1$ because $f_{1}(h)>f_{2}(h)$. Now suppose $\sigma
(h,\{i\},S)>\varepsilon$. Since $\sigma$ has full-support, $p_{\sigma}(y=1\mid
x_{S^{\prime}})<q$ whenever $0\notin S^{\prime}$. This means that $U_{\sigma
}(h,\{i\},true)>U_{\sigma}(h,\{i\},S^{\prime})$ for every such $S^{\prime}$;
hence, $S=true$. We have thus established that the claimed properties hold for
any $\varepsilon$-equilibrium and, hence, in any limit of $\varepsilon
$-equilibria.\medskip
\end{proof}

Step 1 implies that if $(a,\{i\},denial)$ or $(a,\{i\},tribal)$ are in
$Supp(\sigma)$, then $a=\ell$. \medskip

\begin{step}
If $\sigma(\ell,\{i\},denial)>0$, then $i=2$.
\end{step}

\begin{proof}
This follows immediately from $f_{2}(\ell)>f_{1}(\ell)$.\medskip
\end{proof}

\begin{step}
If $\sigma(\ell,\{i\},tribal)>0$, then $i=1$.
\end{step}

\begin{proof}
Step 1$(i)$ and $Pr(y=1|a=\ell)=0$ imply that $p_{\sigma}(y=1\mid x_{i}=1)>0$
only if $i=1$. Therefore, if $(a,\{i\},tribal)$ is in $Supp(\sigma)$, then
$i=1$.\medskip
\end{proof}

The previous steps pin down the three platforms that can be in $Supp(\sigma)$
for any equilibrium $\sigma$, as well as their payoffs:%
\begin{align}
U_{\sigma}(h,\{1\},true)  &  =q\cdot f_{1}(h)\medskip\label{n=2 eq equations}%
\\
U_{\sigma}(\ell,\{2\},denial)  &  =q\cdot\sigma(h,\{1\},true)\cdot f_{2}%
(\ell)\medskip\nonumber\\
U_{\sigma}(\ell,\{1\},tribal)  &  =q\cdot\frac{\sigma(h,\{1\},true)}%
{\sigma(h,\{1\},true)+\sigma(\ell,\{1\},tribal)}\cdot f_{1}(\ell
)\medskip\nonumber
\end{align}

\begin{step}
In any equilibrium, $\sigma(h,\{1\},true)>0$.
\end{step}

\begin{proof}
If $\sigma(h,\{1\},true)=0$, the previous expressions become
\[
U_{\sigma}(\ell,\{2\},denial)=U_{\sigma}(\ell,\{1\},tribal)=0<U_{\sigma
}(h,\{1\},true).
\]
Therefore, $(h,\{1\},true)$ is the only possible member of $Supp(\sigma)$, a
contradiction.\medskip
\end{proof}

\begin{step}
In any equilibrium, $\sigma(\ell,\{1\},tribal)>0$ only if $\sigma
(\ell,\{2\},denial)>0$.
\end{step}

\begin{proof}
Suppose $\sigma(\ell,\{1\},tribal)>0=\sigma(\ell,\{2\},denial)$. Then,%
\[
\sigma(h,\{1\},true)+\sigma(\ell,\{1\},tribal)=1,
\]
so that%
\[
U_{\sigma}(\ell,\{1\},tribal)=q\cdot\sigma(h,\{1\},true)\cdot f_{1}(\ell)
\]
But $f_{2}(\ell)>f_{1}(\ell)$ then implies that $U_{\sigma}(\ell
,\{1\},tribal)<U_{\sigma}(\ell,\{2\},denial)$, which contradicts $\sigma
(\ell,\{1\},tribal)>0$.\medskip
\end{proof}

We are now able to show that an equilibrium exists and it is unique. By Steps
1-3, there are three possible cases for $Supp(\sigma)$: all three platforms or
$\{(h,\{1\},true),(\ell,\{2\},denial)\}$ or $\{(h,\{1\},true)\}$. By the
definition of equilibrium, each member of $Supp(\sigma)$ has to maximize
$U_{\sigma}$.\medskip

\noindent\underline{Case I: $f_{1}(h)\geq f_{2}(\ell)$.} In this case,
$U_{\sigma}(h,\{1\},true)>U_{\sigma}(\ell,\{2\},denial)$ whenever
$\sigma(h,\{1\},true)<1$. Step 5 then implies $Supp(\sigma
)=\{(h,\{1\},true)\}$. Indeed, when $\sigma(h,\{1\},true)=1$,%
\[
U_{\sigma}(h,\{1\},true)\geq U_{\sigma}(\ell,\{2\},denial),U_{\sigma}%
(\ell,\{1\},tribal)
\]
Thus, $\sigma(h,\{1\},true)=1$ is the unique equilibrium.\medskip

\noindent\underline{Case II: $f_{2}(\ell)>f_{1}(h)$.} In this case,
$U_{\sigma}(h,\{1\},true)<U_{\sigma}(\ell,\{2\},denial)$ if $\sigma
(h,\{1\},true)=1$. Therefore, $\sigma(h,\{1\},true)<1$. Step 5 then implies
that $(\ell,\{2\},denial)\in Supp(\sigma)$. Now consider two sub-cases. First,
let $f_{2}(\ell)>f_{1}(h)\geq f_{1}(\ell)$. Then, $U_{\sigma}(\ell
,\{1\},tribal)<U_{\sigma}(h,\{1\},true)$ whenever $\sigma(\ell
,\{1\},tribal)>0$. Therefore,
\[
\sigma(h,\{1\},true)=\frac{f_{1}(h)}{f_{2}(\ell)}\qquad\sigma(\ell
,\{2\},denial)=\frac{f_{2}(\ell)-f_{1}(h)}{f_{2}(\ell)}%
\]
is the unique solution of%
\[
U_{\sigma}(\ell,\{2\},denial)=U_{\sigma}(h,\{1\},true)\geq U_{\sigma}%
(\ell,\{1\},tribal).
\]
Second, let $f_{2}(\ell)>f_{1}(\ell)>f_{1}(h)$. Then,%
\begin{align*}
\sigma(h,\{1\},true)  &  =\frac{f_{1}(h)}{f_{2}(\ell)}\\
\sigma(\ell,\{2\},denial)  &  =\frac{f_{2}(\ell)-f_{1}(\ell)}{f_{2}(\ell)}\\
\sigma(\ell,\{1\},tribal)  &  =\frac{f_{1}(\ell)-f_{1}(h)}{f_{2}(\ell)}%
\end{align*}
is the unique solution of%
\[
U_{\sigma}(\ell,\{2\},denial)=U_{\sigma}(h,\{1\},true)=U_{\sigma}%
(\ell,\{1\},tribal)\medskip
\]

\noindent This completes the proof. $\square$\bigskip

To interpret the equilibrium, consider the case in which
\[
f_{2}(\ell)>f_{1}(\ell)>f_{1}(h)>f_{2}(h)
\]
so that all three platforms described in the proposition are in $Supp(\sigma
)$. The distribution $\sigma$ describes the long-run frequencies with which
each of these platforms prevail. When $(h,\{1\},true)$ prevails, group $1$ in
power, implements the rational policy $h$, and employs the true narrative that
attributes outcomes to policies. When $(\ell,\{2\},denial)$ prevails, group
$2$ is in power, implements policy $\ell$, and employs the denial narrative
that implicitly attributes outcomes to external factors. Finally, when
$(\ell,\{1\},tribal)$ prevails, group $1$ is in power, implements $\ell$, and
employs the tribal narrative that attributes outcomes to who is in power
(without referring to policies).\medskip

\noindent\textit{A dynamic process}

\noindent For an intuition for this equilibrium, it is useful to have a
dynamic process in mind. Imagine that initially there are some random
fluctuations over $(a,C)$ and that only the true narrative is considered.
Then, policy~$\ell$ garners no support because it ensures $y=0$. The true
narrative can only justify policy $h$, which gets stronger support from group
$1$. Thus, the prevailing platform is $(h,\{1\},true)$, and it generates a
payoff of $q\cdot f_{1}(h)$.

Now suppose that at some point, the platform $(\ell,\{2\},denial)$ arises.
Since approximately only policy~$h$ that has ever been taken, the denialist
narrative induces the belief $\Pr(y=1)\approx q$. Because $f_{2}(\ell
)>f_{1}(h)$, the new platform generates stronger support than the
\textquotedblleft incumbent\textquotedblright\ platform $(h,\{1\},true)$. As a
result, the new platform becomes dominant, displacing the old one. Since the
new platform involves $a=\ell$, the historical frequency of $a=h$ gradually
declines, lowering $\Pr(y=1)$.

As this process continues, the denialist platform's payoff will drop below
$q\cdot f_{1}(\ell)$. When this happens, a third narrative can gain traction
and give rise to yet another platform $(\ell,\{1\},tribal)$. In the path
described above, $a=h$ if and only if $x_{1}=1$ (approximately). This implies
the historical conditional probability $\Pr(y=1\mid x_{1}=1)\approx q$.
Consequently, a narrative arguing that things are good when group $1$ is in
power (or, equivalently, when group $2$ is out of power) can mobilize group
$1$ behind policy $\ell$. The payoff of $(\ell,\{1\},tribal)$ is approximately
$q\cdot f_{1}(\ell)$. Since $f_{1}(\ell)>f_{1}(h)$, this payoff exceeds that
of the two previous platforms and $(\ell,\{1\},tribal)$ becomes dominant.
While it dominates, it weakens the correlation between $x_{1}$ and $y$ and
therefore lowers its own payoff. It also weakens the appeal of the denial
narrative by lowering the frequency of $y=1$. This brings the platform carried
by the true narrative back in vogue.%

\[%
{\parbox[b]{3.8261in}{\begin{center}
\includegraphics[
height=2.8721in,
width=3.8261in
]%
{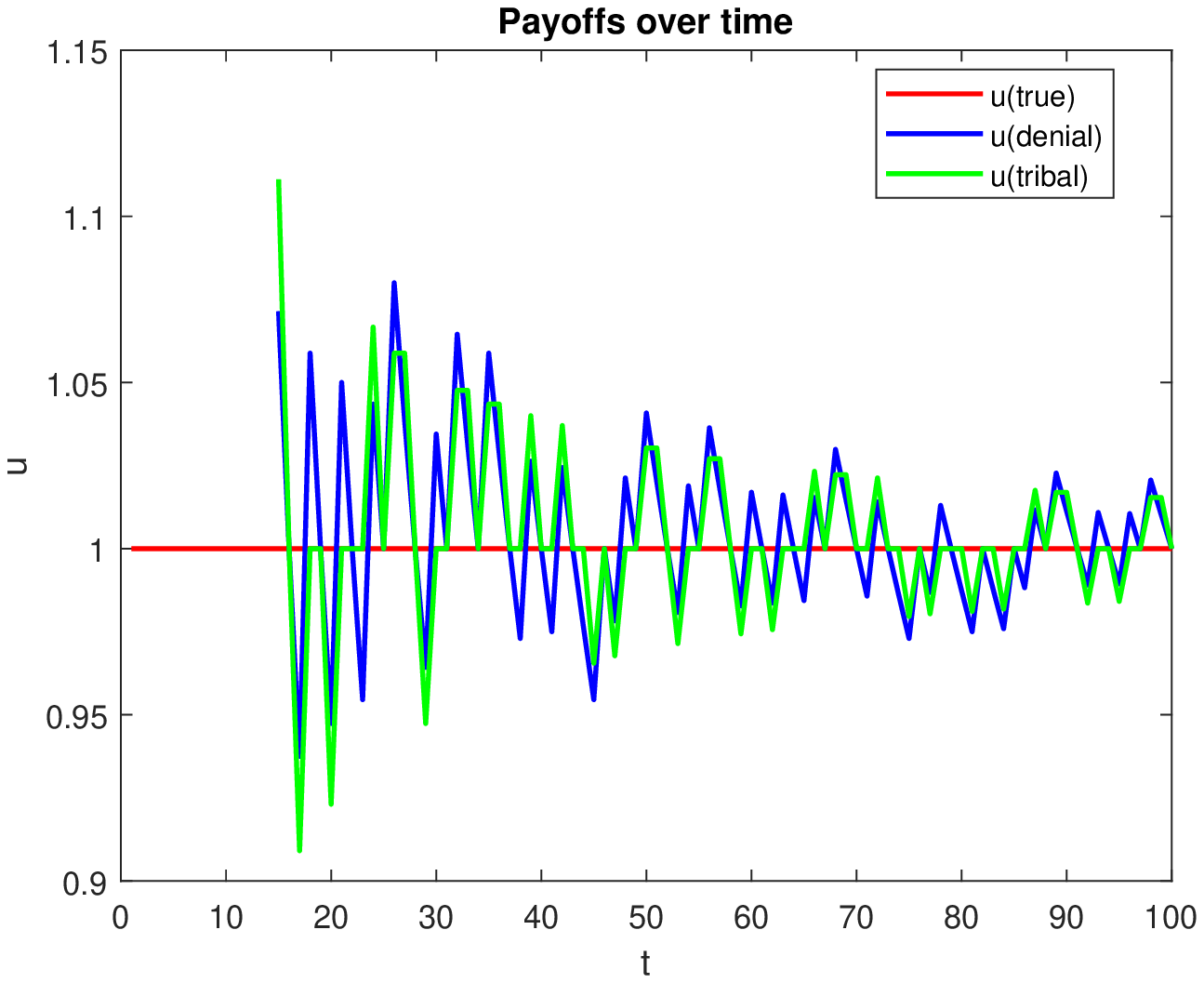}%
\\
Figure 1
\end{center}}}
\]

The subsequent dynamic repeats this cycle, albeit with smaller payoff swings.
In the long run, the process reaches a configuration in which all three
platforms generate the same payoff $q\cdot f_{1}(h)$. From this point on, any
deviation that increases the long-run frequency of one of the three platforms
will trigger an offsetting dynamic response. That is, the equilibrium is
dynamically stable. Figure 1 shows the evolution of the three platforms'
payoffs, assuming $f_{2}(\ell)=3$, $f_{1}(\ell)=2$, and $f_{1}(h)=1$.

We wish to highlight a\ few significant features of Proposition~\ref{prop n=2}%
. The rational policy must be played with positive probability in equilibrium.
The reason is that any platform carried by a false narrative free-rides on
episodes with $a=h$. Also, note that a platform that advocates $a=h$ will
generate its largest support if it employs the true narrative, which
highlights the correlation between $a$ and $y$ (since this correlation is
stronger than the correlation between $y$ and any other variable).

However, when $f_{2}(\ell)>f_{1}(h)$, policy $\ell$ has stronger raw support
than the rational policy $h$. In this case, false narratives allow $\ell$ to
gain dominance at the expense of $h$. They enable supporters to
\textquotedblleft eat their cake and have it:\textquotedblright\ On the one
hand, they are attracted to the policy; on the other hand, the narrative
distracts them from the adverse objective consequences of this policy. The
equilibrium probability of $a=h$ is determined by the ratio $f_{1}%
(h)/f_{2}(\ell)$. What makes policy $\ell$ not only popular but also
\textquotedblleft populist\textquotedblright\ is that it necessitates a false
narrative to mobilize public opinion.

The distinction between the two false narratives---i.e., the denial and tribal
narratives---is irrelevant for the equilibrium probability of $a=h$. However,
it does play an important role for the identity of the group in power. When
$f_{1}(\ell)>f_{1}(h)$, the tribal narrative enables group $1$ to displace
group~$2$, even though it adopts the same \textquotedblleft
populist\textquotedblright\ policy $\ell$. The reason is that group $1$ can
milk its reputation for achieving a good outcome---thanks to its actual
historical tendency to implement $a=h$, which is greater than its rival's. It
does so by highlighting the historical correlation between $y=1$ and being in
power (or, equivalently, group $2$ being out of power). In the next section,
we will see that in societies with more than two groups, tribal narratives are
relevant not only for the power structure, but also for the equilibrium
frequency of the rational policy.

\section{Fragmented Societies}

\label{sec: fragmented_societies}

In this section we characterize equilibria in the presence of more than two
social groups (i.e., $n>2$). This creates a distinction between varieties of
tribal narratives absent in the case of $n=2$: There will be a difference
between \textquotedblleft exclusionary\textquotedblright\ narratives of the
form \textquotedblleft things are good when these groups are \textit{out }of
power\textquotedblright\ and \textquotedblleft inclusionary\textquotedblright%
\ narratives of the form \textquotedblleft things are good when these groups
are \textit{in }power.\textquotedblright\ We will see how proliferation of
exclusionary narratives can have a detrimental effect on the equilibrium
probability of the rational policy. It also leads to new coalitional
structures that would not arise if only the true narrative prevailed.

Toward this end, we impose some structure on the set of feasible narratives.
Let\ $\mathcal{S}$ be a family of subsets of $N$. A narrative $S$ is feasible
if and only if $S\setminus\{0\}\in\mathcal{S}$. We interpret each element in
$\mathcal{S}$ as a collection of social groups that can be \textit{clearly
identified} by a common label or defining attribute. For instance,
$\mathcal{S}$ can represent a division of society along political leanings or
ethnic-religious affiliations, or it can represent salient collections of
social groups (such as \textquotedblleft unionized workers\textquotedblright%
\ or \textquotedblleft the economic elite\textquotedblright).

The motivation for restricting the domain of feasible narratives is that,
depending on the context, not every collection of social groups can be
identified and held accountable for outcomes, because it may lack a clear
representation in governing institutions. In some political systems (e.g.,
Israel) there are political parties that directly represent specific
ethno-religious groups. Consequently, there is data about their power status
and how it is correlated with outcomes, thus making a narrative that exploits
this correlation quantifiable. In other systems (e.g., the US), the mapping
between social groups and political representation is more blurred, thus
restricting the supply of similar narratives.

Recall that $N^{a}=\{i\in N\mid f_{i}(a)>0\}$. Henceforth, we assume that the
sets $N\setminus N^{h}$ (groups that support only $\ell$), $N\setminus
N^{\ell}$ (groups that support only $h$), and $N^{\ell}\cap N^{h}$ (groups
that support both policies) are non-empty and that they are all in
$\mathcal{S}$. We sometimes refer to these broad categories as
\textquotedblleft right,\textquotedblright\ \textquotedblleft
left,\textquotedblright\ and \textquotedblleft center.\textquotedblright%
\ These categories\ are always feasible as tribal narratives. We also assume
that any other $S\in\mathcal{S}$ is a \textit{subset} of one of these three
categories. Finally, we assume that the \textit{true and denialist narratives
are always feasible}.\medskip

\noindent\textit{An illustrative example}

\noindent Let $n=4$ and $\mathcal{S}=\{\{1\},\{2\},\{3\},\{4\},\{3,4\}\}$. The
left\ is $\{1\}$, the center\ is $\{2\}$, and the right\ is $\{3,4\}$%
;\ $\{3\}$ and $\{4\}$ represent sub-divisions of the right (e.g., moderates
and extremists). Let $f_{3}\equiv f_{4}$. Assume further that $f_{2}%
(\ell)>f_{1}(h)+f_{2}(h)$, namely the center's raw support for $\ell$ is
stronger than the raw support for $h$ among the center-left. The following
distribution is an equilibrium (we will shortly see that it is the
\textquotedblleft essentially unique\textquotedblright\ equilibrium):%
\[%
\begin{array}
[c]{cccc}%
\sigma & policy & coalition & narrative\\
\frac{f_{1}(h)+f_{2}(h)}{f_{2}(\ell)+f_{3}(\ell)+f_{4}(\ell)}\medskip & h &
\{1,2\} & true\\
\frac{f_{2}(\ell)-f_{1}(h)-f_{2}(h)}{f_{2}(\ell)+f_{3}(\ell)+f_{4}(\ell
)}\medskip & \ell & \{2\} & \{3,4\}\\
\frac{f_{3}(\ell)+f_{4}(\ell)}{2[f_{2}(\ell)+f_{3}(\ell)+f_{4}(\ell)]}\medskip
& \ell & \{2,3\} & \{4\}\\
ditto & \ell & \{2,4\} & \{3\}
\end{array}
\]

As in the previous section, policy $\ell$ is played with positive probability,
sustained by false narratives. In contrast to that case, however, here all
false narratives take the exclusionary tribal\ form. For example, in the
platform $(\ell,\{2\},\{3,4\})$, the center attributes a good outcome to
keeping the right out of power. Furthermore, the equilibrium exhibits
\textit{endogenous fragmentation}: The center and each faction of the right
sometimes form a coalition, using a false narrative that attributes the good
outcome to keeping the remaining right-wing group out of power.

Finally, note that as in the previous section, the equilibrium probability of
the rational policy $h$ is equal to the ratio between the total mobilization
potential for $h$ among the groups that implement $h$ in equilibrium (i.e.,
center-left) and the total mobilization potential for $\ell$ among the groups
that implement $\ell$ in equilibrium (i.e., center-right). We will see in the
next sub-section that this is not a general feature. \hfill$\square$\bigskip

We now establish existence of equilibrium and show that the joint equilibrium
distribution over policies and coalitions is unique. The proof provides an
algorithm for computing this distribution. Since multiple narratives can
induce the same beliefs, we introduce a natural refinement that pins down the
equilibrium distribution over platforms.\medskip

\begin{definition}
[Essential Equilibrium]\label{def: essential equilibrium} An equilibrium
$\sigma$ is essential if it satisfies two properties:

\begin{itemize}
\item[$(i)$] If $(a,C,S)\in Supp(\sigma)$ and $S\neq\{0\}$, then $U_{\sigma
}(a,C,\{0\})<U_{\sigma}(a,C,S)$.

\item[$(ii)$] If $(a,C,S)\in Supp(\sigma)$ and $S\subseteq N$, then
$U_{\sigma}(a,C,S^{\prime})<U_{\sigma}(a,C,S)$ for all $S^{\prime}\subset
S$.\medskip
\end{itemize}
\end{definition}

\noindent This definition means that if several narratives that accompany
$(a,C)$ induce the same payoff, there is a lexicographically secondary
criterion that favors \textit{true over false} narratives (condition $(i)$)
and \textit{simple over complex} narratives (condition $(ii)$). We introduce
refinement $(i)$ in order to isolate the cases where false narratives are
necessary for the platform's dominance. We introduce refinement $(ii)$ in
order to isolate the components of false narratives that are essential for the
belief they generate.

Our main result uses the following piece of notation:%
\[
F(a,M)=\sum_{i\in M}f_{i}(a),\qquad M\subseteq N.
\]
That is, $F(a,M)$ is the aggregate mobilization potential for policy $a$ in
the collection of groups $M$. Recall that $N^{a}$ is the set of groups with
positive mobilization potential for $a$.\medskip

\begin{proposition}
\label{prop: gen_charact_one_issue} There exists a unique essential
equilibrium $\sigma^{\ast}$. This equilibrium satisfies the following properties:

\begin{itemize}
\item[$(i)$] The rational policy $a=h$ is realized only as part of the
platform $(h,N^{h},\{0\})$.

\item[$(ii)$] The probability of $a=h$ is $1$ if $F(h,N)\geq F(\ell,N)$, and
belongs to $(0,F(h,N)/F(\ell,N)]$ otherwise.

\item[$(iii)$] Every platform $(\ell,C,S)\in Supp(\sigma^{\ast})$ satisfies
$S\subseteq N\setminus N^{h}$ and $C=N^{\ell}\setminus S$.\medskip
\end{itemize}
\end{proposition}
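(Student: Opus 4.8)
The plan is to reduce every platform that can survive in the support to a canonical form, and then to solve the resulting system of equal-payoff equations by a recursion on the inclusion order of narratives. I would work throughout with an $\varepsilon$-equilibrium, where full support makes all the conditional beliefs $p_\sigma(y=1\mid x_S)$ well defined, and pass to the limit only at the end. The one computation I would do up front is the identity $p_\sigma(y=1\mid x_S)=q\,\Pr_\sigma(a=h\mid x_S)$, which holds because $y$ depends on $a$ alone. In particular the true narrative gives $U_\sigma(a,C,\{0\})=qF(a,C)\,\mathbf{1}[a=h]$, and any narrative with $0\in S$ induces exactly the same belief as $\{0\}$; so essentiality condition $(i)$ immediately removes every $S\ni 0$ with $S\neq\{0\}$ from the support, letting me assume all false narratives satisfy $S\subseteq N$.

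Next I would establish the value of the game and part $(i)$ at once. First, $\pi:=\Pr_\sigma(a=h)>0$: if $\pi=0$ then every belief in $y=1$ is zero, so all supported platforms earn $0$, yet deviating to the feasible platform $(h,N^{h},\{0\})$ earns $qF(h,N)>0$, contradicting optimality. Since some $h$-platform is therefore supported and every $h$-platform obeys $U_\sigma(h,C,S)=q\Pr_\sigma(a=h\mid x_S)F(h,C)\le qF(h,N)$, the equilibrium value is exactly $V=qF(h,N)$. A supported $h$-platform must attain $V$, which forces $F(h,C)=F(h,N)$ (hence $C=N^{h}$) and belief $1$; the only narrative then compatible with essentiality $(i)$ is $\{0\}$, since any alternative induces the same belief and so fails to strictly beat the true narrative. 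This is $(i)$.

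For $(iii)$ I would use that, by $(i)$, the unique $h$-episode is $(h,N^{h},\{0\})$, in which a group is in power iff it lies in $N^{h}$. A false $\ell$-platform earns positive payoff only if this episode meets its conditioning event. A ``left'' group ($N^{h}\setminus N^{\ell}$) is in power in the $h$-episode but is excluded from every admissible $\ell$-coalition, so including it in $S$ makes the event unmeetable and the belief $0$; a ``center'' group ($N^{h}\cap N^{\ell}$) is in power in every supported episode, so conditioning on it is vacuous and essentiality $(ii)$ deletes it from $S$. Hence $S\subseteq N\setminus N^{h}$. Given such an $S$, matching the $h$-episode requires $C\cap S=\varnothing$; and since the belief depends on $C$ only through $C\cap S$ while $F(\ell,C)$ is strictly increasing in $C$, payoff maximization selects the largest admissible coalition disjoint from $S$, namely $C=N^{\ell}\setminus S$ (which is nonempty as it contains the center, and a strict subset since the left is nonempty).

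Finally I would pin down $\pi$ and the weights. The denialist platform $(\ell,N^{\ell},\varnothing)$ has payoff $q\pi F(\ell,N)\le V$, giving the bound $\pi\le F(h,N)/F(\ell,N)$; and if $F(h,N)\ge F(\ell,N)$ then any $\ell$-platform obeys $U_\sigma=(q\pi/D_S)F(\ell,C)<qF(\ell,C)\le qF(\ell,N)\le V$ with strict inequality (its own mass keeps the belief below $q$), so none is supported and $\pi=1$; this is $(ii)$. For existence and uniqueness I would write the equal-payoff equations for the surviving platforms $(\ell,N^{\ell}\setminus S,S)$: letting $D_S=\pi+\sum_{S'\supseteq S}\tau_{S'}$ denote the mass of episodes in which all of $S$ is out of power (the $h$-episode qualifies, the denialist one never does), each equation reads $(q\pi/D_S)F(\ell,N^{\ell}\setminus S)=qF(h,N)$ and so fixes $D_S$ as a function of $\pi$. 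Because $D_S$ aggregates only over $S'\supseteq S$, the system is triangular in the inclusion order: I would solve it from the maximal narratives downward (equivalently, by inclusion--exclusion), read off each $\tau_S$, retain exactly those that come out strictly positive, and determine $\pi$ from the normalization $\pi+\tau_\varnothing+\sum_S\tau_S=1$. The main obstacle is precisely this combinatorial step — showing that the downward recursion yields a consistent, non-negative solution with a uniquely determined set of ``active'' narratives, that the resulting $\sigma^{\ast}$ is a genuine probability distribution arising as a limit of $\varepsilon$-equilibria, and that it is essential — which together deliver existence and uniqueness.
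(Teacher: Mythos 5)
Your proposal follows essentially the same route as the paper's proof: establish that the equilibrium value is $qF(h,N)$ and that the only $h$-platform is $(h,N^{h},\{0\})$ via essentiality, rule out narratives touching the ``left'' (belief zero) and the ``center'' (vacuous, killed by essentiality $(ii)$), conclude $S\subseteq N\setminus N^{h}$ and $C=N^{\ell}\setminus S$, and then solve the complementary-slackness system $\alpha(d-F(\ell,S))/F(h,N)\le\sum_{S'\supseteq S}\bar\sigma(S')$ by recursion from maximal narratives downward and normalize to pin down $\alpha$. The combinatorial step you flag as the main obstacle is exactly the paper's algorithm (with $w(S)=\max\{0,\cdot\}$ handling non-negativity), so the sketch is faithful and no gap remains beyond writing that recursion out.
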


\begin{remark}
While Proposition \ref{prop: gen_charact_one_issue} focuses on essential
equilibrium, its proof effectively establishes uniqueness of the equilibrium
distribution over $(a,C)$. The restriction to essential equilibria serves to
pin down the essential components of the narratives that accompany each
policy-coalition pair.\bigskip
\end{remark}

The unique essential equilibrium has the following noteworthy features. First,
the rational policy $h$ is always taken with positive probability. Second,
this is the only policy taken in equilibrium if and only if it has higher
aggregate intrinsic support than $\ell$ (i.e., $F(h,N)\geq F(\ell,N)$). Third,
when policy $\ell$ is taken with positive probability, it is accompanied by
tribal false narratives that take the exclusionary form: They identify a
collection $S\in\mathcal{S}$ of social groups that oppose $h$ but are not part
of the coalition supporting $\ell$, and essentially argue that
\textquotedblleft things are good when $S$ is out of power.\textquotedblright%
\ When $S=\varnothing$, this is reduced to the denialist narrative.

In the case of $n=2$, exclusionary and inclusionary tribal narratives were
equivalent. This is no longer the case when $n>2$. What makes exclusionary
tribal narratives effective in this context? When a social group opposes $h$,
there is positive correlation between the good outcome and the group being out
of power. The exclusionary tribal narrative allows a coalition to exploit this
correlation to generate a false belief among its members that the very
exclusion of some groups from the coalition will lead to a good outcome, while
advocating policy~$\ell$. This device enables the coalition to
\textquotedblleft\textit{have its cake and eat it}:\textquotedblright\ The
coalition reaps the mobilization benefits of the intrinsically more attractive
policy $\ell$, while using the tribal narrative to deflect responsibility for
a bad outcome away from this policy and instead ``scapegoat'' the excluded
groups for it.

Exclusionary tribal narratives balance a trade-off between breadth and
intensity of the support they generate. Excluding groups from a coalition is
costly because it forgoes their support and thus lowers its aggregate
mobilization potential. However, if this exclusion is not too frequent, its
correlation with $a=h$ (and hence $y=1$) remains sufficiently strong, thus
generating intense support from the coalition's members. At one extreme, the
denialist narrative garners the largest coalition because it does not exclude
any group, but this comes at the cost of a weaker belief of $y=1$ and
therefore weaker mobilization of coalition members.

In fact, the denialist narrative plays an important role in the equilibrium
characterization. The upper bound on $\sigma^{\ast}(h,N^{h},\{0\})$ in
Proposition \ref{prop: gen_charact_one_issue} is the probability of $a=h$ in a
model in which the only feasible narratives are the true and denialist ones.
To see why, note that in such a model the payoff from platform $(h,N^{h}%
,\{0\})$ is $q\cdot F(h,N)$, whereas the payoff from $(\ell,N^{\ell
},\varnothing)$ is $q\cdot\sigma^{\ast}(h,N^{h},\{0\})\cdot F(\ell,N)$. Since
$(h,N^{h},\{0\})$ is always in the support of an essential equilibrium, the
payoff from $(\ell,N^{\ell},\varnothing)$ cannot exceed the payoff from
$(h,N^{h},\{0\})$. This implies the upper bound on\ $\sigma^{\ast}%
(h,N^{h},\{0\})$.

Why do equilibrium platforms advocating policy $\ell$ refrain from employing
\textquotedblleft inclusionary\textquotedblright\ tribal narratives $S$ that
attribute the outcome to members of the platform's coalition? The answer is
that in order to generate a positive belief, $S$ must also be contained in the
coalition that supports policy $h$. But since the inclusion of $S$ in a ruling
coalition is associated with a good outcome, $S$ cannot be part of any
\textquotedblleft exclusionary\textquotedblright\ tribal narrative that
accompanies policy $\ell$. This means that $S$ will be part of every coalition
that supports $\ell$. It follows that $S$ will $always$ be in power. As a
result, its power status is uncorrelated with $y$ and therefore a redundant
aspect of any narrative that cannot add to its explanatory power. Part $(ii)$
of the definition of essential equilibrium thus rules out the use of
inclusionary tribal narratives.

\section{Two Special Cases}

In this section, we provide a richer characterization for two specifications
of~$\mathcal{S}$. Throughout this section, we assume $F(\ell,N^{h}%
)>F(h,N^{h})$---i.e., among the groups that support $h$, there is greater
aggregate mobilization potential for $\ell$. This means that while $h$ is the
better policy for $y$, $\ell$ has larger intrinsic appeal.

\subsection{Narratives Based on a Social Taxonomy}

In this sub-section, we assume that $\mathcal{S}$ takes the form of a sequence
of progressively finer \textit{partitions} $\Pi=(\pi_{1},...,\pi_{K})$. That
is, for every $k=2,...,K$, $\pi_{k}$ is a partition of $N$ that is a
refinement of the partition $\pi_{k-1}$. Note that by the general restrictions
we imposed on the domain of feasible narratives, $\pi_{1}$ consists of the
non-empty sets $N^{\ell}\cap N^{h}$, $N\setminus N^{h}$ and $N\setminus
N^{\ell}$. We refer to $\Pi$ as a \textit{social taxonomy}. We sometimes abuse
notation and write $S\in\Pi$ to refer to a cell in one of the partitions in
$\Pi$.

We interpret each cell in a partition as a collection of social groups that
has a clear label or a defining attribute. For example, the broadest partition
can consist of the left, the right, and the center, with finer partitions
representing finer distinctions (e.g., \textquotedblleft
progressives\textquotedblright\ and \textquotedblleft
moderates\textquotedblright\ within the Democratic wing in US politics).
Another example is the division of Israeli society into increasingly fine
ethnic-religious groups (e.g., \textquotedblleft secular
Jewish\textquotedblright\ vs.\thinspace\textquotedblleft religious
Jewish\textquotedblright\ at a coarse level, then splitting the latter group
into \textquotedblleft nationalist-religious\textquotedblright\ and
\textquotedblleft ultra-orthodox\textquotedblright\ sub-categories). Finally,
a typical socioeconomic division of Italian society is between employees,
retirees, and entrepreneurs. Employees are further divided into those who are
unionized and highly protected and those who are not.

We impose some structure on $\Pi$: For every layer $k>1$ there is an integer
$r_{k}>1$ such that, for every $S\in\pi_{k-1}$, there are \textit{exactly}
$r_{k}$ sets $S^{\prime}\in\pi_{k}$ that satisfy $S^{\prime}\subset S$. In
other words, at every layer $k-1<K$ in the social taxonomy, each category is
split into the same number $r_{k}$ of sub-categories in the next layer. For
$k=1$, let $r_{1}=1$. Denote%
\[
R=\sum_{k=1}^{K}(r_{k}-1)
\]
This number is a natural measure of how fragmented the social taxonomy is,
because it increases with the number of levels $K$ and with the number of
splits-per-cell $r_{k}$ at each layer $k$ of the social taxonomy.\medskip

\begin{proposition}
\label{prop: eq_categories} The unique essential equilibrium $\sigma^{\ast}$
satisfies%
\begin{equation}
\sigma^{\ast}(h,N^{h},\{0\})=\frac{F(h,N)}{F(\ell,N^{h})+\max(R,1)\cdot
F(\ell,N\setminus N^{h})} \label{formula taxonomy}%
\end{equation}
Moreover:\medskip\newline$(i)$ Every $S\subseteq N\setminus N^{h}$ that
belongs to one of the partitions in $\Pi$ is used as a narrative in
$\sigma^{\ast}$.\medskip\newline$(ii)$ The denialist narrative is used in the
support of $\sigma^{\ast}$ if and only if $K=1$.\medskip
\end{proposition}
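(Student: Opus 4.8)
The plan is to build on Proposition~\ref{prop: gen_charact_one_issue}, which already guarantees that the unique essential equilibrium $\sigma^{\ast}$ contains the platform $(h,N^{h},\{0\})$ with payoff $q\,F(h,N)$, and that every $\ell$-platform in its support has the form $(\ell,N^{\ell}\setminus S,S)$ with $S\subseteq N\setminus N^{h}$. Since $N\setminus N^{h}$ is a cell of $\pi_{1}$ and the $\pi_{k}$ are nested, the feasible narratives $S\subseteq N\setminus N^{h}$ are exactly the cells of the taxonomy lying below $N\setminus N^{h}$, together with $S=\varnothing$. Writing $\sigma_{h}=\sigma^{\ast}(h,N^{h},\{0\})$ and $\sigma_{S}=\sigma^{\ast}(\ell,N^{\ell}\setminus S,S)$, I would first record that for a tribal narrative $S$ the induced belief is $p_{\sigma^{\ast}}(y=1\mid x_{S})=q\,\sigma_{h}/D_{S}$, where $D_{S}$ is the probability that \emph{all} groups in $S$ are out of power. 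Because the cells form a laminar family and each group of $N\setminus N^{h}$ lies in $N^{\ell}$, a group of $S$ is out of power in $(\ell,N^{\ell}\setminus S',S')$ iff $S\subseteq S'$, so $D_{S}=\sigma_{h}+\sum_{S'\supseteq S}\sigma_{S'}$, the sum running over ancestor cells. The equal-payoff condition $U_{\sigma^{\ast}}(\ell,N^{\ell}\setminus S,S)=q\,F(h,N)$ then becomes $D_{S}=\sigma_{h}\,(F(\ell,N)-F(\ell,S))/F(h,N)$.

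Second, I would convert these identities into a recursion along the tree. Differencing the relation for a cell $S$ and its parent $P(S)$ gives $\sigma_{S}=D_{S}-D_{P(S)}=\tfrac{\sigma_{h}}{F(h,N)}\bigl(F(\ell,P(S))-F(\ell,S)\bigr)$, while for the top cell $N\setminus N^{h}$ the analogous difference against the $h$-platform mass $\sigma_{h}$ yields $\sigma_{N\setminus N^{h}}=\tfrac{\sigma_{h}}{F(h,N)}\bigl(F(\ell,N^{h})-F(h,N)\bigr)$. Each is strictly positive: the first because $S\subsetneq P(S)$ and every right-wing group has $f_{i}(\ell)>0$, the second because of the standing assumption $F(\ell,N^{h})>F(h,N^{h})=F(h,N)$ of this section. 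This already establishes part~$(i)$, since every cell $S\subseteq N\setminus N^{h}$ of $\Pi$ carries positive probability.

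Third --- and this is where the taxonomy does the real work --- I would sum all probabilities and impose $\sigma_{h}+\sum_{S}\sigma_{S}=1$. The total of the increments is the top-cell term plus $\sum_{S}(F(\ell,P(S))-F(\ell,S))$ over non-top cells. Grouping the non-top increments by parent collapses each sibling group of size $r_{k}$ into $(r_{k}-1)F(\ell,P)$, and grouping the result by layer uses the key identity that the cells of $\pi_{k}$ contained in $N\setminus N^{h}$ partition $N\setminus N^{h}$, so $\sum_{P\in\pi_{k},\,P\subseteq N\setminus N^{h}}F(\ell,P)=F(\ell,N\setminus N^{h})$ at every layer. This telescopes the layer contributions to $\sum_{k=2}^{K}(r_{k}-1)\,F(\ell,N\setminus N^{h})=R\cdot F(\ell,N\setminus N^{h})$, and substituting into the normalization gives exactly $\sigma_{h}=F(h,N)/\bigl(F(\ell,N^{h})+R\,F(\ell,N\setminus N^{h})\bigr)$. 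The hardest step is precisely this bookkeeping: the increments involve $F(\ell,P)$ for internal cells, which vary across the tree, so the cancellation is not a plain telescope but rests on the partition identity holding separately at each layer.

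Finally, for part~$(ii)$ and the $\max(R,1)$, I would treat $K=1$ on its own. When $K=1$ the only tribal narrative is $N\setminus N^{h}$ itself, so the tribal platforms cannot exhaust probability one; the denialist platform $(\ell,N^{\ell},\varnothing)$ must enter, and its equal-payoff condition $q\,\sigma_{h}F(\ell,N)=q\,F(h,N)$ forces $\sigma_{h}=F(h,N)/F(\ell,N)$, which is the stated formula with $R=0$ read through $\max(R,1)=1$. When $K\geq2$ we have $R\geq1$, the tribal platforms already sum to one, and the denialist platform's payoff $q\,\sigma_{h}F(\ell,N)$ is at most $q\,F(h,N)$ (strictly less when $R>1$); in either case no probability mass is left for it, so it is absent from the support. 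Uniqueness of the entire configuration is inherited from Proposition~\ref{prop: gen_charact_one_issue}; it then remains only to verify that the candidate just assembled is a genuine essential equilibrium --- all support platforms tie at $q\,F(h,N)$ and no other platform strictly exceeds it --- which the computations above supply.
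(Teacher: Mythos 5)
Your proposal is correct and follows essentially the same route as the paper: it invokes Proposition~\ref{prop: gen_charact_one_issue} to reduce to platforms $(\ell,N^{\ell}\setminus S,S)$ with $S\subseteq N\setminus N^{h}$, pins down each cell's probability by differencing the binding equal-payoff conditions along the taxonomy tree (the paper's recursive algorithm in the same top-down order), treats the denialist narrative as the residual case, and recovers $\alpha$ from normalization via the same layer-by-layer partition identity $\sum_{P\in\pi_{k},\,P\subseteq N\setminus N^{h}}F(\ell,P)=F(\ell,N\setminus N^{h})$. The only presentational caveat is that the differencing step should be run as ``inequality, hence strictly positive lower bound, hence in the support, hence equality'' from the top cell downward --- which your positivity remarks already supply --- rather than assuming the equalities outright.
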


Let us highlight a few features of the unique essential equilibrium in the
social-taxonomy case. First, we obtain an explicit formula for the probability
that the rational policy is taken, which decreases with $R$. Thus, political
fragmentation can be detrimental to the implementation of socially beneficial
policies because it creates more room for false tribal narratives.

Second, every social category that is weakly finer than $N\setminus N^{h}$
serves as an exclusionary tribal narrative in the equilibrium. To see the
intuition, suppose that some category in the social taxonomy is invoked by an
exclusionary tribal narrative, but one of its direct sub-categories is never
used. That is, the support of $\sigma$ includes a platform that uses some
narrative $S\in\pi_{k}$, but there is a narrative $S^{\prime}\subset S$ that
belongs to $\pi_{k+1}$ and is not employed by any platform. Because of the
nested partition structure of $\Pi$, the cells that weakly contain $S$ and
$S^{\prime}$ are the same. That is, realization of $\sigma^{\ast}$ in which
the members of $S$ are not in power are the same as the realizations in which
the members of~$S^{\prime}$ are not in power. This means that the narratives
$S$ and $S^{\prime}$ generate the same beliefs. However, the larger coalition
that attributes the outcome to $S^{\prime}$ would generate stronger
mobilization. Therefore, a platform that employs the narrative $S^{\prime}$
would have a strictly higher payoff, which cannot happen in equilibrium.

Third, the denialist narrative is used only in the special case in which the
social taxonomy lacks a finer distinction than the broad classification given
by $\pi_{1}$. In this case, $\sigma^{\ast}(h,N^{h},\{0\})$ is equal to the
upper bound $F(h,N)/F(\ell,N)$. The only other case in which the latter
property holds is when the social taxonomy breaks $N\setminus N^{h}$ into
precisely two sub-categories, and there is no finer categorization (in this
case, $R=1$).

Formula (\ref{formula taxonomy}) enables us to subject $\sigma^{\ast}%
(h,N^{h},\{0\})$ to some simple comparative statics with respect to the
primitives of the model. First, a \textit{richer taxonomy} (as captured by
increasing $R$) lowers the equilibrium value of $\sigma^{\ast}(h,N^{h}%
,\{0\})$, due to the proliferation of false narratives in equilibrium. In this
sense, a more fragmented political system intensifies the \textquotedblleft
tribal\textquotedblright\ character of public opinion, and this has a cost in
terms of the common good.

Now consider changes in the mobilization potential function that reflect
\textit{more polarized attitudes} toward policy $\ell$. Specifically, suppose
that we lower $F(\ell,N^{h})$ and raise $F(\ell,N\setminus N^{h})$, while
keeping $F(\ell,N)$ fixed (without upsetting the inequality\ $F(\ell
,N^{h})>F(h,N^{h})$ that we assumed at the outset). This captures a shift of
support for $\ell$ from the \textquotedblleft center\textquotedblright\ to the
\textquotedblleft right\textquotedblright, resulting in a more polarized
society. When the social taxonomy satisfies $R>1$, this shift lowers
$\sigma^{\ast}(h,N^{h},\{0\})$. In this sense, greater polarization is
detrimental for the common good.

\subsection{A Rich Domain of Narratives}

In this sub-section, we consider the extreme case in which $\mathcal{S}$ is
the set of \textit{all} subsets $S\subseteq N$ that are weakly contained in
$N\setminus N^{h}$, $N\setminus N^{\ell}$, or $N^{\ell}\cap N^{h}$. We refer
to this specification of $\mathcal{S}$ as the \textquotedblleft
rich\textquotedblright\ narrative domain.\medskip

\begin{proposition}
\label{prop: eq_rich_domain} Under the rich narrative domain, the unique
essential equilibrium $\sigma^{\ast}$ satisfies the following
properties:\medskip\newline$(i)$ $\sigma^{\ast}(h,N^{h},\{0\})=\frac
{F(h,N)}{F(\ell,N)}$\medskip\newline$(ii)$ The rest of\ $Supp(\sigma^{\ast})$
consists of all platforms $(\ell,N^{\ell}\setminus S,S)$ such that $S\subseteq
N\setminus N^{h}$ and $\mid S\mid\geq\mid N\setminus N^{h}\mid-1$.\medskip
\end{proposition}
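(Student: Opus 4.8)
The plan is to leverage Proposition~\ref{prop: gen_charact_one_issue} as a foundation, then specialize its conclusions to the rich domain. From part~$(iii)$ of that proposition, I already know that every equilibrium platform advocating $\ell$ has the form $(\ell, N^\ell\setminus S, S)$ with $S\subseteq N\setminus N^h$, and that the rational policy appears only as $(h,N^h,\{0\})$. So the two properties I must establish are: the exact value $\sigma^\ast(h,N^h,\{0\})=F(h,N)/F(\ell,N)$, and the precise identification of which exclusionary narratives $S$ appear in support---namely exactly those $S\subseteq N\setminus N^h$ with $|S|\geq |N\setminus N^h|-1$.

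For the value in part~$(i)$, I would argue that under the rich domain the \emph{denialist} narrative $S=\varnothing$ (or more generally the narrative using $S=N\setminus N^h$, which is feasible) is available, and that in equilibrium the payoffs of all platforms in the support must be equalized. The key computation is the payoff of the platform $(\ell,N^\ell\setminus S,S)$: its belief $p_\sigma(y=1\mid x_S)$ equals $q$ times the conditional frequency of $a=h$ given that the groups in $S$ are out of power, and its mobilization is this belief times $F(\ell,N^\ell\setminus S)$. The platform $(h,N^h,\{0\})$ has payoff $q\cdot F(h,N)$. Equating the payoff of the \emph{largest} $\ell$-coalition (the one using $S=\varnothing$, which gives belief $q\cdot\sigma^\ast(h,N^h,\{0\})$ and mobilization $F(\ell,N^\ell)=F(\ell,N)$) to $q\cdot F(h,N)$ should directly yield $\sigma^\ast(h,N^h,\{0\})=F(h,N)/F(\ell,N)$. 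I would need to confirm that $\varnothing$ is indeed in the support under the rich domain, which leads naturally into part~$(ii)$.

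For part~$(ii)$, the heart of the matter is showing that precisely the narratives of size $\geq|N\setminus N^h|-1$ appear. I expect to use the trade-off between breadth and intensity described in the text: a \emph{smaller} $S$ yields a \emph{larger} coalition $N^\ell\setminus S$ (more breadth, higher $F(\ell,N^\ell\setminus S)$) but exclusion becomes rarer, so the correlation between ``$S$ out of power'' and $y=1$ weakens (lower belief/intensity). The claim that only the two coarsest sizes survive suggests that in equilibrium the intensity gain from using a very large $S$ (excluding almost all of $N\setminus N^h$) exactly compensates for the breadth lost, whereas any smaller $S$ would be strictly dominated. I would set up the payoff-equalization conditions across the candidate platforms and solve the resulting linear system for the frequencies, then verify that requiring all these frequencies to be nonnegative forces the excluded (smaller) narratives to carry zero weight. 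The essential-equilibrium refinement~$(ii)$, which favors simpler narratives when beliefs coincide, must be invoked to rule out redundant larger-than-necessary narratives and to justify the exact form $C=N^\ell\setminus S$.

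\textbf{The main obstacle} will be the combinatorial bookkeeping in part~$(ii)$: verifying that the system of payoff-equalization equations, summed against the probability constraints, admits a consistent nonnegative solution \emph{only} when the support is restricted to $|S|\in\{|N\setminus N^h|,|N\setminus N^h|-1\}$. The delicate point is computing $p_\sigma(y=1\mid x_S)$ for each candidate $S$, since this conditional belief aggregates over all platforms in which the groups of $S$ happen to be out of power---including both the $h$-platform and the various $\ell$-platforms with narratives $S'\neq S$---so the beliefs are coupled across the support in a way that must be disentangled. I anticipate that a symmetry or counting argument (exploiting that all relevant $S$ of a given size enter symmetrically) will reduce this to a tractable low-dimensional system, but pinning down the exact cutoff $|N\setminus N^h|-1$ rather than some other threshold is where the real work lies.
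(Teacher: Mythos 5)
Your plan for part $(ii)$ follows the paper's route: restrict attention (via Proposition~\ref{prop: gen_charact_one_issue}) to platforms $(\ell,N^{\ell}\setminus S,S)$ with $S\subseteq N\setminus N^{h}$, write the payoff-equalization conditions as the linear system $\alpha\,(d-F(\ell,S))/F(h,N)\leq\sum_{S'\supseteq S}\bar{\sigma}(S')$ (with equality when $\bar{\sigma}(S)>0$), and solve from the coarsest narrative down. The "coupling" you worry about is exactly resolved by the fact that the relevant conditioning event for narrative $S$ is the union of the events for all $S'\supseteq S$, which is what makes the system triangular. Executing it gives $\bar{\sigma}(N\setminus N^{h})=\alpha\,(F(\ell,N^{h})-F(h,N))/F(h,N)>0$ and $\bar{\sigma}(N\setminus(N^{h}\cup\{i\}))=\alpha\, f_{i}(\ell)/F(h,N)>0$ for each $i\in N\setminus N^{h}$; and for any maximal smaller $S$, the identity $F(\ell,N)-F(\ell,S)=F(\ell,N^{h})+F(\ell,N\setminus(N^{h}\cup S))$ shows the required mass is already exactly exhausted by the larger narratives in the support, forcing $\bar{\sigma}(S)=0$. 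That identity is the concrete content behind your anticipated "cutoff" computation, and your plan would reach it.

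The genuine gap is in part $(i)$. You propose to pin down $\alpha=\sigma^{\ast}(h,N^{h},\{0\})$ by equating the payoff of the denialist platform $(\ell,N^{\ell},\varnothing)$ to $U^{\ast}=q\,F(h,N)$, conditional on confirming that $\varnothing$ is in the support. That confirmation fails: under the rich domain (whenever $|N\setminus N^{h}|\geq 2$), the support of $\bar{\sigma}$ contains only the narratives of size $|N\setminus N^{h}|$ and $|N\setminus N^{h}|-1$, so the denialist platform is \emph{not} used, and the equilibrium condition on it delivers only the inequality $\alpha\leq F(h,N)/F(\ell,N)$ — which is already Proposition~\ref{prop: gen_charact_one_issue}$(ii)$ and does not give equality. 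The paper instead obtains the exact value from the adding-up constraint: since $\bar{\sigma}(N\setminus N^{h})+\sum_{i\in N\setminus N^{h}}\bar{\sigma}(N\setminus(N^{h}\cup\{i\}))=\alpha\,(F(\ell,N)-F(h,N))/F(h,N)$, normalization $\alpha+\sum_{S}\bar{\sigma}(S)=1$ forces $\alpha=F(h,N)/F(\ell,N)$. That this coincides with the denialist upper bound is a consequence of the rich-domain support structure, not a premise you can assume; as written, your part $(i)$ has no valid derivation once the support check fails, so you need to reorder the argument — compute the support and weights first, then read off $\alpha$ from normalization.
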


Thus, when the narrative domain is rich, the equilibrium structure is simple.
The rational policy is played with the maximal possible probability, as given
by our main result. As usual, the rational policy is carried by the true
narrative and supported by the coalition $N^{h}$. All false narratives are
tribal. In each of them, the collection of groups that is held accountable is
either $N\setminus N^{h}$ (i.e., everyone that opposes $h$) or a subset that
excludes exactly one group from this set. In these cases, this excluded group
joins the coalition supporting $\ell$.

We can see that even though the domain of possible tribal narratives is
maximally rich, only a small subset of them is used. The result also
demonstrates the non-monotonicity of the effect of tribalism and political
fragmentation on the equilibrium probability of the rational policy. The rich
domain represents a larger scope for tribal narratives than the taxonomies
studied in the previous sub-section. Nevertheless, the probability of rational
behavior is higher than for most social taxonomies. The reason is that if a
platform uses an exclusionary tribal narrative that is given by a small set
$S$, the collection of narratives in $\mathcal{S}$ that contain $S$ is larger
than in the taxonomy case. This means that in the rich domain case, using
small narratives $S$ significantly dilutes the belief that $y=1$, because
there are many instances where $S$ is not in power and $a=\ell$. This effect
deters the entry of platforms that rely on such small narratives, and
therefore blocks their proliferation.

\section{Model Foundations}

\subsection{A \textquotedblleft Microfoundation\textquotedblright\ for
Mobilization Potentials}

\label{subsec: microfoundation}

In this section, we illustrate how our definition of group $i$'s support for
platforms, $u_{i,\sigma}$, can be derived from a more detailed model in which
\textit{anticipatory utility} drives political participation.

Each social group $i$ consists of a measure $m_{i}$ of individuals. Each
individual has a payoff function $v(y,a)=y-c_{a,i}$, where the idiosyncratic
cost $c_{a,i}$ is drawn from the uniform distribution over $[0,\bar{c}_{a,i}%
]$. Thus, the constant $\bar{c}_{a,i}>0$ is an action-specific characteristic
of the social group $i$.

Suppose group $i$ belongs to a coalition $C$ that is part of the platform
$(a,C,S)$. The group's conditional belief over $y$ is given by $p_{\sigma
}(y\mid x_{S}(a,C))$. The platform mobilizes an individual in group $i$ with
cost $c_{a,i}$ if and only if his subjective anticipatory utility from the
platform is positive---that is, if $p_{\sigma}(y=1\mid x_{S}(a,C))>c_{a,i}$.
As a result, the total mass of mobilized individuals from group $i$ is%
\[
m_{i}\cdot\frac{p_{\sigma}(y=1\mid x_{S}(a,C))}{\bar{c}_{a,i}}%
\]
This is consistent with our definition of $u_{i,\sigma}$ with respect to a
mobilization potential $f_{i}(a)=m_{i}/\bar{c}_{a,i}$.

It should be emphasized that this \textquotedblleft
microfoundation\textquotedblright\ by no means rationalizes the individuals'
behavior, because they are not engaged in consequentialist analysis of their
participation decision. Nevertheless, it offers a deeper understanding of the
psychology of motivated reasoning behind our mobilization-potential function.

\subsection{A Dynamic Convergence Result}

In this section, we consider a simple and natural dynamic process that
determines which platforms garner maximal support over time. We show that the
process converges to the unique equilibrium distribution over policies and
coalitions in our main result. This global convergence result provides a
dynamic foundation for our equilibrium concept.

Time is discrete and denoted by $t=1,2,\ldots$. In each period $t$, there is a
distribution $\sigma_{t}$ over platforms $(a,C,S)$, where $a\in\{\ell,h\}$,
$C\subseteq N$, and $S\in\mathcal{S}$. Let the initial $\sigma_{1}$ be any
distribution with full support over the set of platforms using admissible
coalitions. Since the set of platforms is finite, this distribution is
well-defined. The distribution $\sigma_{t}$ evolves according to the following
adjustment. For every $t\geq2$, let
\[
\overline{(a,C,S)}_{t}\in\underset{(a^{\prime},C^{\prime},S^{\prime})}%
{\arg\max}\,U_{\sigma_{t}}(a^{\prime},C^{\prime},S^{\prime}),
\]
where ties can be broken arbitrarily. Then, let
\[
\sigma_{t+1}(a,C,S)=%
\begin{cases}
\frac{1}{t+1}+\frac{t}{t+1}\sigma_{t}(a,C,S)\qquad & \text{if }%
(a,C,S)=\overline{(a,C,S)}_{t}\\
\, & \\
\frac{t}{t+1}\sigma_{t}(a,C,S)\qquad & \text{otherwise.}%
\end{cases}
\]
Thus, for $t$ large enough, we can essentially view $\sigma_{t}(a,S,C)$ as the
empirical frequency with which platform $(a,C,S)$ has been dominant in the
available history of data.\medskip

\begin{proposition}
\label{prop: dynamics} Every limit point $\sigma$ of the process $\sigma_{t}$
induces the same distribution over policy-coalition pairs $(a,C)$ as that
induced by the unique essential equilibrium $\sigma^{\ast}$.\medskip
\end{proposition}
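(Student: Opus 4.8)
The plan is to exploit the fact that the update rule makes $\sigma_{t}$ the running empirical frequency of the sequence of period winners, and then to run the thermostat-like negative feedback already visible in the ``dynamic process'' discussion following Proposition~\ref{prop n=2}. First I would record the averaging representation
\[
\sigma_{t}=\tfrac{1}{t}\Big(\sigma_{1}+\sum_{s=1}^{t-1}e_{w_{s}}\Big),
\]
where $w_{s}$ is the winner at date $s$ and $e_{w}$ is the point mass on platform $w$. In particular $\sigma_{t}$ keeps full support for every finite $t$ (it dominates $\tfrac{1}{t}\sigma_{1}$), so all conditional beliefs $p_{\sigma_{t}}(y=1\mid x_{S})$ and hence all payoffs $U_{\sigma_{t}}$ are well defined and continuous along the trajectory, and the increments satisfy $\lVert\sigma_{t+1}-\sigma_{t}\rVert=O(1/t)$. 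By compactness the set of limit points is non-empty, and the target reduces to showing that every limit point $\sigma$ satisfies the equilibrium optimality condition $Supp(\sigma)\subseteq\arg\max_{(a,C,S)}U_{\sigma}$; once this is done, the uniqueness of the equilibrium distribution over $(a,C)$ established in Proposition~\ref{prop: gen_charact_one_issue} (see the Remark following it) immediately forces every limit point to induce the $\sigma^{\ast}$ marginal on $(a,C)$.

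The engine of the argument is the observation that $(h,N^{h},\{0\})$ is an \emph{anchor}: since $p_{\sigma}(y=1\mid a=h)\equiv q$ for every full-support $\sigma$, its payoff is the constant $V^{\ast}:=q\,F(h,N^{h})$, and under full support it strictly dominates every other platform carrying $a=h$; hence every winner with $a=h$ has coalition $N^{h}$, and $M_{t}:=\max U_{\sigma_{t}}\ge V^{\ast}$ for all $t$, with any winner that strictly beats $V^{\ast}$ necessarily using $a=\ell$. I would then prove $M_{t}\to V^{\ast}$ via negative feedback: each $\ell$-platform has payoff $U_{\sigma}(\ell,C,S)=q\,p_{\sigma}(a=h\mid x_{S}(\ell,C))\,F(\ell,C)$, and whenever it is selected the update adds mass to a state with $a=\ell$ lying in the conditioning event $x_{S}(\ell,C)$, strictly lowering $p_{\sigma}(a=h\mid x_{S}(\ell,C))$ and thus its own payoff. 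Quantifying this—if $M_{t}\ge V^{\ast}+\delta$ along a subsequence, then while $\ell$-platforms keep beating $V^{\ast}$ the anchor mass $\sigma_{t}(h,N^{h},\{0\})$ is repeatedly multiplied by $t/(t+1)$ and decays, pulling the conditional frequencies of $a=h$ (and hence the $\ell$-payoffs) back below $V^{\ast}$ and contradicting persistence of the gap—yields $\limsup_{t}M_{t}\le V^{\ast}$, hence $M_{t}\to V^{\ast}$.

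With $M_{t}\to V^{\ast}$ in hand, I would fix a limit point $\sigma=\lim_{k}\sigma_{t_{k}}$ and show each $(a,C,S)\in Supp(\sigma)$ attains $U_{\sigma}(a,C,S)=V^{\ast}=\max U_{\sigma}$: using $\lVert\sigma_{t+1}-\sigma_{t}\rVert=O(1/t)$, a platform carrying positive mass in $\sigma$ is selected with positive frequency along the dates approaching $t_{k}$, and at each such date its payoff equals $M_{t}\to V^{\ast}$, so continuity gives $U_{\sigma}(a,C,S)=V^{\ast}$. Continuity is legitimate because the anchor retains mass $P^{\ast}:=\sigma(h,N^{h},\{0\})>0$, and since $N^{h}$ excludes every $S\subseteq N\setminus N^{h}$, the conditioning events ``$S$ out of power'' that govern the surviving $\ell$-platforms keep probability at least $P^{\ast}$, so the relevant beliefs never enter the undefined $0/0$ regime in the limit. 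This establishes $Supp(\sigma)\subseteq\arg\max U_{\sigma}$, and the reduction above completes the proof.

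The main obstacle is the step $M_{t}\to V^{\ast}$: although each individual $\ell$-platform exhibits clean negative feedback, different exclusionary narratives share conditioning events—reinforcing a coarse narrative $S$ dilutes the belief of every finer $S'\subset S$ (a state with all of $S$ out of power also has all of $S'$ out of power), while the reverse dilution does not occur—so the feedback is only triangular with respect to the nesting of narratives rather than separable. Ruling out persistent overshooting or cycling among the competing $\ell$-platforms is where the nested and shared structure exploited in Propositions~\ref{prop: eq_categories} and~\ref{prop: eq_rich_domain} must be used; the cleanest rigorous route, should the elementary density argument of the previous paragraph prove delicate, is to read the recursion as a stochastic-approximation scheme with step size $1/(t+1)$, invoke that its limit set is internally chain transitive for the best-response differential inclusion $\dot{\sigma}\in\overline{BR}(\sigma)-\sigma$, and verify that a smoothed version of $M(\sigma)-V^{\ast}$ is a Lyapunov function whose zero set is exactly the set of distributions inducing the $\sigma^{\ast}$ marginal.
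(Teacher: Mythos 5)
Your overall architecture matches the paper's: the anchor platform $(h,N^{h},\{0\})$ with constant payoff $U^{\ast}=qF(h,N^{h})$ bounding the running maximum from below, the claim that the running maximum converges to $U^{\ast}$, and the final reduction to the uniqueness of the equilibrium $(a,C)$-distribution via continuity of $U_{\sigma}$. The last paragraph of your reduction (limit points satisfy $U_{\sigma}(z)\le U^{\ast}$ with equality on the support) is also how the paper closes.

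The genuine gap is exactly where you flag it: $\limsup_{t}M_{t}\le U^{\ast}$. Your sketch --- ``while $\ell$-platforms keep beating $U^{\ast}$ the anchor mass decays, pulling beliefs back down'' --- rules out the scenario in which $\ell$-platforms win at \emph{every} late date, but $\limsup_{t}M_{t}>U^{\ast}$ is compatible with recurrent overshooting: $\ell$-payoffs could repeatedly climb back above $U^{\ast}+\delta$ after each return of the anchor, and nothing in your argument excludes this cycling. The paper closes this with a quantitative three-part argument that you do not supply: (i) the total mass on platforms $(h,N^{h},\cdot)$ stays bounded below by some $\underline{\sigma}>0$ in the liminf (their Step 8, itself an argument by contradiction); (ii) therefore a single win by an $h$-platform multiplies any surviving $\ell$-payoff by at most $1+\tfrac{1/t}{\underline{\sigma}}\to 1$; and (iii) whenever some $\ell$-payoff exceeds $U^{\ast}$ the winner must be an $\ell$-platform, and then \emph{all} $\ell$-payoffs with the matching conditioning event weakly decrease (their Step 6). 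Combining (ii) and (iii) with the fact that $h$ wins infinitely often traps every $\ell$-payoff below $U^{\ast}+\varepsilon$ from some date on, contradicting the overshooting subsequence. Note also that this argument is entirely structure-free: your worry that the nested/shared narrative structure of Propositions~\ref{prop: eq_categories} and~\ref{prop: eq_rich_domain} is needed to tame the ``triangular'' feedback is a red herring, and the stochastic-approximation/Lyapunov fallback you propose is both heavier machinery than required and left unconstructed (no Lyapunov function is exhibited or verified). As written, the proposal identifies the right obstacle but does not overcome it.
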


This result formalizes and generalizes the dynamic convergence process we
discussed in the context of the two-group specification in Section 3.\medskip

\noindent\textit{Comment: An alternative dynamic interpretation}

\noindent Throughout the paper, we interpreted the platform support function
$u_{i,\sigma}(a,C,S)$ given by (\ref{platform support}) as if agents have a
long memory, which enables them to measure the long-run conditional frequency
$p_{\sigma}(y=1\mid x_{S}(a,C))$. This interpretation also underlies the
dynamic process considered in this sub-section.

An alternative interpretation is that agents reason anecdotally. When a
combination $a,C$ is proposed and a narrative $S$ is suggested at some time
period, agents are reminded of the last period in which $x_{S}(a,C)$ took the
same value. For example, if the narrative attributes the outcome to whether
group $i$ is in power, agents' attention is drawn to the last period in which
this group was indeed in power, and they record the realization of $y$ during
that period. This means that $p_{\sigma}(y=1\mid x_{S}(a,C))$ does not
represent a probabilistic belief, but rather the probability that a random
anecdote will be favorable for the platform $(a,C,S)$.

This alternative interpretation is realistic in the sense that it reflects
voters' short memory and tendency to overreact to single, recent anecdotes.
Whether a dynamic process based on this interpretation induces long-run
frequencies of dominant platforms that match our notion of equilibrium is left
as an open problem.

\section{Related Literature}

\cite{ES2020} presented the basic idea of formalizing political narratives as
causal models whose adoption by agents is driven by motivated reasoning. The
present paper borrows these ingredients. The key difference is that
\cite{ES2020} considered a representative-agent model, whereas the present
paper assumes a heterogeneous society and focuses on the role of false
narratives as the \textquotedblleft glue\textquotedblright\ of social
coalitions. Indeed, the existence of multiple social groups in the present
model enables the new class of \textquotedblleft tribal\textquotedblright%
\ narratives, which are moot in the single-agent model.

More broadly, this paper is related to a strand in the political-economics
literature that studies voters' belief formation according to misspecified
subjective models or wrong causal attribution rules (e.g., \cite{Spiegler2013}%
, \cite{EP2017}, \cite{izzo2021ideological}, and \cite{razin2021misspecified}%
). In particular, the latter paper studies dynamic electoral competition
between two candidates, each associated with a different subjective model of
how two policy variables map into outcomes. One model is complete and correct;
the other is a \textquotedblleft simplistic\textquotedblright\ model that
omits one of the policy variables. Voter participation is costly; stronger
beliefs lead to larger voter turnout. The long-run behavior of this system
involves ebbs and flows in the relative popularity of the two models.

The general program of studying the behavioral implications of misspecified
causal models is due to Spiegler (\citeyear{Spiegler2016,Spiegler2020}). In
their general form, causal models are formalized as directed acyclic graphs,
following the Statistics/AI literature on graphical probabilistic models
(\cite{CLDSNLJ1999}, \cite{Pearl2009}). The causal models in this paper fit
into the graphical formalism (as we saw in Section 2), but do not require its
heavy use because they take a relatively simple form (this form is related to
misspecified models in otherwise very different works, such as
\cite{Jehiel2005}, \cite{EP2013} or \cite{MS2020}). Therefore, in this paper,
graphical representations of causal models remained mostly in the background.

Given the fluidity of the notion of narratives, it naturally invites diverse
formalizations. \cite{benabou2018narratives} focus on moral decision-making
and formalize narratives as messages or signals that can affect
decision-makers' beliefs regarding the externality of their policies.
\cite{levy2021maximum} use the term to describe information structures in
game-theoretic settings that people postulate in order to explain observed
behavior. Schwartzstein and Sunderam
(\citeyear{schwartzstein2021shared,SS2021}) propose an alternative approach to
\textquotedblleft persuasion by models,\textquotedblright\ where models are
formalized as likelihood functions and the criterion for selecting models is
their success in accounting for historical observations. \cite{Shiller2017}
focuses on the spread of economic narratives in society, using an
epidemiological analogy.

The political science literature has long acknowledged the power of narratives
in garnering public support for policies and in mobilizing people to protests
or rallies (see \cite{polletta2008storytelling}). In particular, the so-called
\textquotedblleft narrative policy framework\textquotedblright\ was developed
as a systematic empirical framework for studying the role of stories or
narratives in public policy. Studies employing this framework have argued that
narratives have a greater influence on the opinions of policymakers and
citizens than does scientific information (see, e.g., \cite{SJM2011},
\cite{JM2010}, and \cite{JMS2014}).

Finally, there are a few recent attempts to study political and economic
narratives empirically, using textual analysis. Mobilizing public opinion
often takes the form of texts (speeches, op-eds, tweets). What we observe in
these texts are qualitative stories more than bare quantitative beliefs.
Textual analysis can elicit the element of causal attribution in political
narratives from these texts (\cite{ash2021text} and \cite{andre2022narratives}
have performed manual and machine anaysis of these texts in order to elicit
prevailing narratives in various contexts). Our formalism, which explicitly
focuses on narratives as simple causal models, will hopefully provide tools
for interpreting such textual data and linking them to economic and social indicators.

\section{Conclusion}

This paper has explored the role of false narratives in the mobilization of
public opinion in heterogeneous societies. Our main insight is that false
narratives enable social groups to dissociate the link between the intrinsic
appeal of certain policies and their adverse outcome. They achieve this by
attributing outcomes to spurious causes, exploiting historical correlations
and misrepresenting them as causal. This typically takes the form of
exclusionary tribal narratives, which argue that keeping certain social groups
out of power leads to good outcomes. Such narratives are reminiscent of
\textquotedblleft scapegoating,\textquotedblright\ a type of narrative that is
often used in the political arena.

This insight suggests a novel perspective into the idea of
\textit{retrospective voting }(see \cite{HM2013} for a review article, and
\cite{PK2017} for an example that extends the concept to multi-party systems).
This is the notion that voters punish or reward parties according to their
performance (measured by certain outcomes) when they were in office. This view
puts less emphasis on processes (i.e., the policies that ruling parties take)
and more emphasis on outcomes. The conventional view is that retrospective
voting is a \textquotedblleft healthy\textquotedblright\ feature of democratic
politics because it improves government accountability and helps selecting
competent candidates. By comparison, our perspective is that attributing
public outcomes to who is (or is not) in power rather than to the implemented
policies can be a demagogic false narrative that is detrimental for public
outcomes. This offers a new and more critical view of retrospective
voting.\pagebreak

\noindent{\LARGE Appendix: Proofs}\bigskip

\noindent\textbf{Proof of Proposition \ref{prop: gen_charact_one_issue}%
\medskip}

\noindent We organize the proof in a series of steps. We will posit the
existence of an essential equilibrium, characterize its properties, and then
confirm that we indeed have an equilibrium. Hereafter, let $\sigma$ be any
candidate essential equilibrium. Note that by definition, $F(a,N)=F(a,N^{a})$.
We use the two notations interchangeably for expositional purposes. For
convenience, hereafter we denote
\begin{equation}
d=F(\ell,N)-F(h,N) \label{d}%
\end{equation}
\medskip

\begin{step}
There exists $(a,C,S)\in Supp(\sigma)$ such that $a=h$.
\end{step}

\begin{proof}
Assume the contrary---i.e., $a=\ell$ for every $(a,C,S)\in Supp(\sigma)$.
Then, $p_{\sigma}(y=1)=0$. Therefore,
\[
U_{\sigma}(a,C,S)=p_{\sigma}(y=1\mid x_{S}(a,C))=0
\]
for every $(a,C,S)\in Supp(\sigma)$. By the definition of equilibrium,
$\sigma$ is the limit of a sequence of $\varepsilon$-equilibria for some
$\varepsilon\rightarrow0$. Since $\sigma(a,C,S)>0$, $\sigma_{\varepsilon
}(a,C,S)$ is bounded away from zero, and therefore $U_{\sigma_{\varepsilon}%
}(a,C,S)\approx p_{\sigma_{\varepsilon}}(y=1\mid x_{S}(a,C))\approx0$, for
some point along the sequence $\varepsilon\rightarrow0$. By contrast,
$U_{\sigma_{\varepsilon}}(h,N^{h},\{0\})=q\cdot F(h,N)$, which is bounded away
from zero and therefore higher than $U_{\sigma_{\varepsilon}}(a,C,S)$. Since
by assumption $(h,N^{h},\{0\})\notin Supp(\sigma)$, we have a
contradiction.\medskip
\end{proof}

This step is the only place in the proof where we use the trembles of
$\varepsilon$-equilibria, to ensure that the payoff from $(h,N^{h},\{0\})$ is
well-defined. From now on, we focus on the $\varepsilon\rightarrow0$ limit
itself.\medskip

\begin{step}
The only $(a,C,S)\in Supp(\sigma)$ with $a=h$ is $(h,N^{h},\{0\})$.
\end{step}

\begin{proof}
For every $(h,C,S)\in Supp(\sigma)$,%
\[
p_{\sigma}(y=1\mid x_{S}(h,C))=q\cdot p_{\sigma}(x_{0}=h\mid x_{S}(h,C))\leq
q
\]
whereas by definition, $p_{\sigma}(y=1\mid x_{0}=h)=q$. Moreover, $F(h,C)\leq
F(h,N^{h})$ by definition. It follows that $U_{\sigma}(h,C,S)\leq U_{\sigma
}(h,N^{h},\{0\})$. Condition $(i)$ in the definition of essential equilibrium
then implies that $S=\{0\}$. Since $F(h,N^{h})>F(h,C^{\prime})$ for every
$C^{\prime}\subset N^{h}$, it follows that $C=N^{h}$.\medskip
\end{proof}

\begin{corollary}
Equilibrium payoffs are
\begin{equation}
U^{\ast}=q\cdot F(h,N^{h}) \label{U*}%
\end{equation}

\end{corollary}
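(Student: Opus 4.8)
The plan is to read off the corollary directly from Steps~1 and~2, which have just been established, together with the definition of equilibrium. First I would combine the two steps: Step~1 guarantees that $Supp(\sigma)$ contains \emph{some} platform with $a=h$, and Step~2 identifies that platform uniquely as $(h,N^{h},\{0\})$. Hence $(h,N^{h},\{0\})\in Supp(\sigma)$.

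Next I would invoke the equilibrium condition. By Definition~\ref{def: equilibrium}, every platform in the support of $\sigma$ maximizes $U_{\sigma}$, so all supported platforms attain the common maximal value---which is precisely what $U^{\ast}$ denotes. In particular $U^{\ast}=U_{\sigma}(h,N^{h},\{0\})$, so it suffices to evaluate this single payoff.

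Finally I would compute it explicitly. The true narrative $S=\{0\}$ conditions the induced belief only on the policy coordinate, and $x_{0}(h,N^{h})=h$, so the belief is $p_{\sigma}(y=1\mid x_{0}=h)=q$ by (\ref{Pr(y|a)}); crucially this value equals $q$ regardless of $\sigma$. Summing the group supports over the coalition $N^{h}$ then yields $U_{\sigma}(h,N^{h},\{0\})=\sum_{i\in N^{h}}q\cdot f_{i}(h)=q\cdot F(h,N^{h})$, which is the claimed value.

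There is essentially no obstacle here: the statement is an immediate consequence of the two preceding steps. The only points requiring (routine) care are confirming that the coalition attached to the true-narrative platform is exactly $N^{h}$---already delivered by Step~2---and observing that, because the belief under the true narrative is the constant $q$ rather than a $\sigma$-dependent conditional, the equilibrium payoff is a clean constant. This constant is what will anchor the payoff comparisons in the remaining steps of the proof.
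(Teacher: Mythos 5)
Your argument is correct and is exactly the reasoning the paper compresses into ``this follows immediately from Steps 1 and 2'': Steps~1 and~2 together place $(h,N^{h},\{0\})$ in $Supp(\sigma)$, the equilibrium condition makes its payoff the common maximal value $U^{\ast}$, and the true narrative yields the $\sigma$-independent belief $q$, giving $U^{\ast}=q\cdot F(h,N^{h})$. No difference in approach, only in the level of detail.
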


This follows immediately from Steps 1 and 2. Note that the expression for
$U^{\ast}$ is independent of $\sigma$.\medskip From now on, we use the
notation%
\[
\alpha=\sigma(h,N^{h},\{0\})
\]

\begin{step}
If $x_{S}(\ell,C)=x_{S}(h,N^{h})$, then%
\begin{equation}
p_{\sigma}(y=1\mid x_{S}(\ell,C))=\frac{q\alpha}{\alpha+\sum_{C^{\prime
},S^{\prime}\mid x_{S}(\ell,C^{\prime})=x_{S}(\ell,C)}\sigma(\ell,C^{\prime
},S^{\prime})} \label{eq: conditional prob}%
\end{equation}
Otherwise, $p_{\sigma}(y=1\mid x_{S}(\ell,C))=0$.
\end{step}

\begin{proof}
Suppose $0\notin S$. By definition,
\[
p_{\sigma}(y=1\mid x_{S}(\ell,C))=\frac{q\cdot\sum_{C^{\prime},S^{\prime}\mid
x_{S}(h,C^{\prime})=x_{S}(\ell,C)}\sigma(h,C^{\prime},S^{\prime})}%
{\sum_{a^{\prime},C^{\prime},S^{\prime}\mid x_{S}(a^{\prime},C^{\prime}%
)=x_{S}(\ell,C)}\sigma(a^{\prime},C^{\prime},S^{\prime})}%
\]
By Step 2, the numerator can be rewritten as%
\[
q\cdot\sigma(h,N^{h},\{0\})\cdot\mathbf{1}[x_{S}(\ell,C)=x_{S}(h,N^{h})]
\]
which delivers (\ref{eq: conditional prob}). (Note that when $0\notin S$,
$x_{S}(\ell,C)=x_{S}(h,C^{\prime})$ if and only if $S\cap C=S\cap C^{\prime}%
$.) Now suppose $0\in S$. Then,%
\begin{equation}
p_{\sigma}(y=1\mid x_{S}(\ell,C))=p_{\sigma}(y=1\mid x_{0}=\ell)=0
\label{conditional prob 0 in S}%
\end{equation}
\medskip
\end{proof}

\begin{corollary}
For every $(\ell,C,S)\in Supp(\sigma)$, $0\notin S$.
\end{corollary}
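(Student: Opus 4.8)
The plan is to argue by contradiction, leveraging the payoff characterization already in hand. Suppose toward a contradiction that $(\ell,C,S)\in Supp(\sigma)$ with $0\in S$. Every platform in the support of an equilibrium must attain the common equilibrium payoff, which by (\ref{U*}) equals $U^{\ast}=q\cdot F(h,N^{h})$; hence $(\ell,C,S)$ must satisfy $U_{\sigma}(\ell,C,S)=U^{\ast}$.

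The key is then to show that $U^{\ast}>0$ while simultaneously $U_{\sigma}(\ell,C,S)=0$, which yields the contradiction. For the strict positivity of $U^{\ast}$, I would note that $q\in(0,1]$ and that the maintained assumption on the primitives guarantees $f_{i}(h)>0$ for some $i\in N$, so $N^{h}\neq\varnothing$ and $F(h,N^{h})>0$; thus $U^{\ast}>0$. For the vanishing of the platform's payoff, I would invoke the second clause of Step 3, namely equation (\ref{conditional prob 0 in S}): when $0\in S$, conditioning on $x_{S}(\ell,C)$ pins down $x_{0}=\ell$, so $p_{\sigma}(y=1\mid x_{S}(\ell,C))=p_{\sigma}(y=1\mid x_{0}=\ell)=0$. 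Since $U_{\sigma}(\ell,C,S)=p_{\sigma}(y=1\mid x_{S}(\ell,C))\cdot F(\ell,C)$, this forces $U_{\sigma}(\ell,C,S)=0$, contradicting $U_{\sigma}(\ell,C,S)=U^{\ast}>0$.

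There is essentially no hard step here: the corollary is an immediate consequence of the payoff normalization in (\ref{U*}) together with the ``otherwise'' case of Step 3, which already records that a narrative attributing the outcome to the policy generates a degenerate belief of $0$ whenever the accompanying policy is $\ell$. The only point requiring a moment's care is confirming $U^{\ast}>0$, which rests on the primitive assumption that each policy commands positive mobilization potential from at least one group, so that $F(h,N^{h})>0$. Intuitively, a platform that plays $\ell$ but truthfully admits (via $0\in S$) that $\ell$ causes the outcome cannot mobilize anyone, since $\ell$ delivers $y=0$ with certainty; hence such a platform can never be dominant.
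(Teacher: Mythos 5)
Your argument is correct and coincides with the paper's own proof: both invoke the ``otherwise'' clause of Step 3 (equation (\ref{conditional prob 0 in S})) to conclude $U_{\sigma}(\ell,C,S)=0<U^{\ast}$, hence $(\ell,C,S)\notin Supp(\sigma)$. Your extra remark verifying $U^{\ast}>0$ from the primitive assumption that $N^{h}\neq\varnothing$ is a fine, if implicit in the paper, bit of bookkeeping.
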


\begin{proof}
Suppose $0\in S$. By (\ref{conditional prob 0 in S}), $U_{\sigma}%
(\ell,C,S)=0<U^{\ast}$, hence $(\ell,C,S)\notin Supp(\sigma)$.\medskip
\end{proof}

\begin{step}
If $F(\ell,N)\leq F(h,N)$, then $\alpha=1$. If $F(\ell,N)>F(h,N)$, then%
\[
\alpha\leq\frac{F(h,N)}{F(\ell,N)}%
\]

\end{step}

\begin{proof}
Suppose $F(\ell,N)\leq F(h,N)$, but $\alpha<1$. Then, there exists
$(\ell,C,S)\in Supp(\sigma)$, which implies by definition that the denominator
of (\ref{eq: conditional prob}) is greater than $\alpha$ and hence $p_{\sigma
}(y=1\mid x_{S}(\ell,C))<q$. It follows that
\[
U_{\sigma}(\ell,C,S)=p_{\sigma}(y=1\mid x_{S}(\ell,C))\cdot F(\ell,C)<q\cdot
F(\ell,N)\leq q\cdot F(h,N)=U^{\ast}%
\]
which is a contradiction. Thus, in this case, $\alpha=1$. Suppose
$F(\ell,N)>F(h,N)$. If $\alpha=1$, then
\[
U_{\sigma}(\ell,N^{\ell},\varnothing)=p_{\sigma}(y=1)F(\ell,N)=qF(\ell
,N)>U^{\ast}%
\]
which is again a contradiction. Thus, in this case, $\alpha<1$. Since we must
have $U_{\sigma}(\ell,N^{\ell},\varnothing)\leq U^{\ast}$ in any equilibrium,
and since $p_{\sigma}(y=1)=q\alpha$, it follows that $q\alpha\cdot
F(\ell,N)\leq q\cdot F(h,N)$. This implies that upper bound on $\alpha$ in
this case. Note that this upper bound relies on the assumption that the
denialist narrative is feasible (i.e., $\varnothing\in\mathcal{S}$).\medskip
\end{proof}

The next step establishes that in equilibrium, false narratives take the
\textquotedblleft exclusionary\textquotedblright\ tribal form (including the
denialist narrative as a special case).\medskip

\begin{step}
If $(\ell,C,S)\in Supp(\sigma)$, then $S\subseteq N\setminus N^{h}$ and
$C=N^{\ell}\setminus S$.
\end{step}

\begin{proof}
The proof consists of two steps. First, we show that $N^{h}\cap N^{\ell
}\subseteq C$ for every $(a,C,S)\in Supp(\sigma)$. Then, we show that this
implies the claim. By Step 3, $0\notin S$ for every $(\ell,C,S)\in
Supp(\sigma)$. By the restriction we imposed on the domain of feasible
narratives, $S$ is a weak subset of $N\setminus N^{\ell}$, $N\setminus N^{h}$
or $N^{h}\cap N^{\ell}$. Let us consider each of these cases.

First, suppose $S\subseteq N\setminus N^{\ell}$. By our definition of
admissible coalitions, any platform $(\ell,C,S)$ satisfies $C\cap
S=\varnothing$---i.e., $x_{i}(\ell,C)=0$ for every $i\in S$. By Step~3,
$p_{\sigma}(y=1\mid x_{S}(\ell,C))=0$, hence $U_{\sigma}(\ell,C,S)=0$, a contradiction.

Second, suppose $S\subseteq N\setminus N^{h}$. If $C\subset N^{h}\cap N^{\ell
}$, then $U_{\sigma}(\ell,C,S)<U_{\sigma}(\ell,N^{h}\cap N^{\ell},S)$. The
reason is that $p_{\sigma}(y=1\mid x_{S}(\ell,C))=p_{\sigma}(y=1\mid
x_{S}(\ell,N^{h}\cap N^{\ell}))$ (since the narrative $S$ ignores the power
status of groups in $N^{h}\cap N^{\ell}$), and $f_{i}(\ell)>0$ for every $i\in
N^{h}\cap N^{\ell}$. Since $(\ell,C,S)$ fails to maximize $U_{\sigma}$, we
obtain a contradiction. It follows that $N^{h}\cap N^{\ell}\subseteq C$.

The only remaining case is thus $S\subseteq N^{h}\cap N^{\ell}$. By steps 1
and 2, $x_{i}(h,N^{h})=1$ for every $i\in S$. By Step 3, $p_{\sigma}(y=1\mid
x_{S}(\ell,C))>0$ only if $x_{i}(\ell,C)=1$ for every $i\in S$---i.e.,
$S\subseteq C$. Since $f_{i}(\ell)>0$ for every $i\in N^{h}\cap N^{\ell}$, $C$
is an admissible coalition given policy $\ell$. If $C\subset N^{h}\cap
N^{\ell}$, then $U_{\sigma}(\ell,C,S)<U_{\sigma}(\ell,N^{h}\cap N^{\ell},S)$,
by the same argument as in the previous case. Therefore, $N^{h}\cap N^{\ell
}\subseteq C$.

We have thus established that $S\subseteq N^{\ell}$ for every $(\ell,C,S)\in
Supp(\sigma)$, and that $N^{h}\cap N^{\ell}\subseteq C$ for every platform in
$Supp(\sigma)$. Suppose $S\subseteq N^{h}\cap N^{\ell}$ and $S\neq\varnothing$
for some $(\ell,C,S)\in Supp(\sigma)$. Then,
\[
p_{\sigma}(y=1\mid x_{S}(\ell,C))=p_{\sigma}(y=1\mid x_{i}(\ell,C)=1\text{ for
every }i\in S)=p_{\sigma}(y=1)
\]
Therefore,%
\[
U_{\sigma}(\ell,C,\varnothing)=U_{\sigma}(\ell,C,S)
\]
By condition $(ii)$ in the definition of essential equilibrium, $(\ell
,C,S)\notin Supp(\sigma)$, a contradiction.

The only remaining possibility is that $S\subseteq N\setminus N^{h}$. As we
saw, this implies $N^{h}\cap N^{\ell}\subseteq C$. However, admissibility of
$C$ given $\ell$ requires $C\subseteq N^{\ell}$. It follows that $C=N^{\ell
}\setminus S$.\medskip
\end{proof}

The last step implies that platforms $(\ell,C,S)\in Supp(\sigma)$ are entirely
pinned down by $S$. Therefore, in the rest of the proof, we use the notation
$\bar{\sigma}(S)=\sigma(\ell,C,S)$.\medskip

\begin{step}
\label{lem: eq_ineq} $(\alpha,\bar{\sigma})$ is an equilibrium if and only if,
for all $S\in\mathcal{S}$ that satisfy $S\subseteq N\setminus N^{h}$,
\begin{equation}
\alpha\cdot\frac{d-F(\ell,S)}{F(h,N)}\leq\sum_{S^{\prime}\in\mathcal{S}\mid
S^{\prime}\supseteq S}\bar{\sigma}(S^{\prime}) \label{ineq step 6}%
\end{equation}
\label{eq: eq_ineq} with equality if $\bar{\sigma}(S)>0$. (Recall that $d$ is
defined by (\ref{d}).)
\end{step}

\begin{proof}
Using Definition \ref{def: equilibrium}, $\sigma$ is an equilibrium if and
only if $U_{\sigma}(\ell,C,S)\leq U^{\ast}$ for all $(\ell,C,S)$, with
equality if $\sigma(\ell,C,S)>0$. By Step 5, we can restrict attention to
platforms $(\ell,C,S)$ for which $S\subseteq N\setminus N^{h}$ and $C=N^{\ell
}\setminus S$. Plugging expressions (\ref{U*}) and (\ref{eq: conditional prob}%
), the inequality $U_{\sigma}(\ell,C,S)\leq U^{\ast}$ can be rewritten as a
linear inequality in $\sigma$:%
\[
\alpha\cdot\frac{F(\ell,C)-F(h,N)}{F(h,N)}\leq\sum_{C^{\prime},S^{\prime}\mid
x_{S}(\ell,C^{\prime})=x_{S}(\ell,C)}\sigma(\ell,C^{\prime},S^{\prime})
\]
Note that $F(\ell,C)-F(h,N)=d-F(\ell,S)$, such that the L.H.S of the last
inequality becomes the L.H.S of (\ref{ineq step 6}). Finally, by Step 5,
$(\ell,C^{\prime},S^{\prime})\in Supp(\sigma)$ satisfies $x_{S}(\ell
,C^{\prime})=x_{S}(\ell,C)$ if and only if $S^{\prime}\supseteq S$ (since
$C=N^{\ell}\setminus S$ and $C^{\prime}=N^{\ell}\setminus S^{\prime}$). This
means that we can replace the R.H.S of the last inequality with the R.H.S of
(\ref{ineq step 6}).\medskip
\end{proof}

The last step immediately implies the following observation, which means that
exclusionary tribal narratives will \textquotedblleft
scapegoat\textquotedblright\ a collection of groups only when its aggregate
mobilization potential is not too large.\medskip

\begin{corollary}
\label{step: f_refinement} For every $S\subseteq N\setminus N^{h}$, if
$F(\ell,S)\geq d$, then $\bar{\sigma}(S)=0$.\medskip
\end{corollary}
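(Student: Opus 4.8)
The plan is to derive this corollary directly from the equilibrium characterization in Step~\ref{lem: eq_ineq}, arguing by contradiction. Fix $S\subseteq N\setminus N^{h}$ with $F(\ell,S)\geq d$, and suppose toward a contradiction that $\bar{\sigma}(S)>0$. The entire argument is a sign-bookkeeping exercise on inequality~(\ref{ineq step 6}), so I expect no substantive obstacle: the content is essentially immediate once the complementary-slackness clause of Step~\ref{lem: eq_ineq} is invoked.

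First I would record the sign of the left-hand side of~(\ref{ineq step 6}) for this particular $S$. Since $F(\ell,S)\geq d$, the numerator $d-F(\ell,S)$ is nonpositive; combining this with $\alpha\geq0$ and $F(h,N)>0$ (the denominator is strictly positive because each policy has at least one group with positive mobilization potential), the full left-hand side $\alpha\cdot\frac{d-F(\ell,S)}{F(h,N)}$ is nonpositive. Next I would invoke the equality clause of Step~\ref{lem: eq_ineq}: because we have assumed $\bar{\sigma}(S)>0$, inequality~(\ref{ineq step 6}) must hold with equality at $S$, so the right-hand side $\sum_{S'\in\mathcal{S}\,:\,S'\supseteq S}\bar{\sigma}(S')$ equals this nonpositive quantity.

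Finally I would observe that the right-hand side is bounded below by the single term $\bar{\sigma}(S)$ (all summands are nonnegative probabilities, and $S$ is among the sets $S'\supseteq S$), hence it is strictly positive by the standing assumption $\bar{\sigma}(S)>0$. This contradicts its equality with a nonpositive number, and the contradiction forces $\bar{\sigma}(S)=0$. The only point requiring any care is the positivity of $F(h,N)$, which guarantees the left-hand side is well-defined and that its sign is governed solely by $d-F(\ell,S)$; everything else is an immediate consequence of the equilibrium equality condition together with nonnegativity of $\bar{\sigma}$.
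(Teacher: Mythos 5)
Your argument is correct and is precisely the reasoning the paper has in mind: it states the corollary follows immediately from Step~\ref{lem: eq_ineq}, and the intended derivation is exactly your sign comparison --- the equality clause forces the nonpositive left-hand side of~(\ref{ineq step 6}) to equal a right-hand side that contains the strictly positive term $\bar{\sigma}(S)$. No gaps; your extra care about $F(h,N)>0$ is warranted and covered by the paper's standing assumption that every policy has some group with positive mobilization potential.
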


Inequalities (\ref{ineq step 6}) enable us to construct an algorithm that pins
down the essential-equilibrium distribution.\medskip

\noindent\textbf{An algorithm for computing }$\bar{\sigma}(S)$\textbf{ for all
}$S\in\mathcal{S}$\textbf{ such that }$S\subseteq N\setminus N^{h}$ \noindent
Let%
\[
\overline{\mathcal{S}}=\{S\in\mathcal{S}\mid S\subseteq N\setminus N^{h}\text{
and }F(\ell,S)<d\}.
\]
Define%
\[
\overline{\mathcal{S}}_{1}=\{S\in\overline{\mathcal{S}}\mid\text{there is no
}S^{\prime}\in\overline{\mathcal{S}}\text{ such that }S\subset S^{\prime}\}
\]
Now, for every $k>1$, define $\overline{\mathcal{S}}_{k}$ recursively as
follows:%
\[
\overline{\mathcal{S}}_{k}=\{S\in\overline{\mathcal{S}}\mid\text{there is no
}S^{\prime}\in\overline{\mathcal{S}}\setminus\cup_{j<k}\overline{\mathcal{S}%
}_{j}\text{ such that }S\subset S^{\prime}\}
\]
Since $\overline{\mathcal{S}}$ is finite, in this way we obtain a finite
sequence $\{\overline{\mathcal{S}}_{k}\}_{k=1}^{K}$. This sequence exhausts
all the candidate narratives, as it identifies all exclusionary tribal
narratives allowed by the primitive $\mathcal{S}$. The algorithm starts from
the \textquotedblleft top layer\textquotedblright\ of $\overline{\mathcal{S}}$
(i.e., $\overline{\mathcal{S}}_{1}$) and then proceeds to the other layers in
order. For every $S\in\overline{\mathcal{S}}_{1}$, (\ref{ineq step 6}) can be
written as%
\[
\bar{\sigma}(S)\geq\alpha\cdot\frac{d-F(\ell,S)}{F(h,N)}%
\]
By the definition of $\overline{\mathcal{S}}$, the R.H.S is strictly positive
for every $S\in\overline{\mathcal{S}}_{1}$, which implies that $S$ is in the
equilibrium support and therefore the inequality must hold with equality. This
pins down $\bar{\sigma}(S)$. For every $S\in\overline{\mathcal{S}}$, denote
$\mathcal{H}(S)=\{S^{\prime}\in\overline{\mathcal{S}}\mid S\subset S^{\prime
}\}$. By definition, if $S\in\overline{\mathcal{S}}_{k}$, then $\mathcal{H}%
(S)\subseteq\cup_{j<k}\overline{\mathcal{S}}_{j}$. Now, by induction, suppose
that for all $j<k$ and every $S\in\overline{\mathcal{S}}_{j}$, there exists
$w(S)\geq0$ such that $\bar{\sigma}(S)=\alpha w(S)$. For $S\in\overline
{\mathcal{S}}_{1}$, we have already established that, where $w(S)=(d-F(\ell
,S))/F(h,N)$. For every $S\in\overline{\mathcal{S}}_{k}$, (\ref{ineq step 6})
becomes
\[
\bar{\sigma}(S)=\max\left\{  0\text{ },\text{ }\alpha\cdot\frac{d-F(\ell
,S)}{F(h,N)}-\alpha\sum_{S^{\prime}\in\mathcal{H}(S)}w(S^{\prime})\right\}
\]
where $w(S^{\prime})$ is well-defined for all $S^{\prime}\in\mathcal{H}(S)$,
by the inductive argument. This confirms that $\bar{\sigma}(S)=\alpha w(S)$,
where%
\[
w(S)=\max\left\{  0\text{ },\text{ }\frac{d-F(\ell,S)}{F(h,N)}-\sum
_{S^{\prime}\in\mathcal{H}(S)}w(S^{\prime})\right\}
\]
completing the inductive argument, and thus the definition of the algorithm
for computing $\bar{\sigma}(S)$.$\medskip$

\begin{step}
The algorithm establishes existence and uniqueness of an essential equilibrium.
\end{step}

\begin{proof}
Since $(\alpha,\bar{\sigma})$ must define a probability distribution, we must
have%
\[
\alpha+\sum_{S\in\overline{\mathcal{S}}}\bar{\sigma}(S)=1
\]
Moreover, we have obtained unique expressions for each $\bar{\sigma}(S)$ that
depend multiplicatively on $\alpha$. This pins down the value of $\alpha$:%
\[
\alpha=\frac{1}{1+\sum_{S\in\overline{\mathcal{S}}}w(S)}%
\]
Thus, we have pinned down $(\alpha,\bar{\sigma})$. Since this pair satisfies
all the inequalities (\ref{ineq step 6}), it is therefore an equilibrium by
construction.\bigskip
\end{proof}

\noindent\textbf{Proof of Proposition \ref{prop: eq_categories}\medskip}

\noindent We use the algorithm in the proof of Proposition
\ref{prop: gen_charact_one_issue} to characterize the equilibrium. We begin by
calculating the equilibrium probabilities of the cells that comprise $\Pi$.
Note that $N\setminus N^{h}\in\pi_{1}$ and $F(\ell,N^{h})>F(h,N^{h})$ imply
that $F(\ell,S)<d$ for every $S\subseteq N\setminus N^{h}$ that belongs to one
of the partitions in $\Pi$. Given this, for $N\setminus N^{h}\in\pi_{1}$, we
must have
\[
\bar{\sigma}(N\setminus N^{h})=\alpha\cdot\frac{d-F(\ell,N\setminus N^{h}%
)}{F(h,N)}>0
\]
Let $k>1$. Given $S_{k}\in\pi_{k}$ such that $S_{k}\subset N\setminus N^{h}$,
the collection of sets $\mathcal{H}(S_{k})=\{S^{\prime}\in\overline
{\mathcal{S}}\mid S_{k}\subset S^{\prime}\}$ in the proof of Proposition
\ref{prop: gen_charact_one_issue} takes the form of a chain $\{S_{j}%
\}_{j=1}^{k-1}$ that satisfies $S_{j}\in\pi_{j}$ and $S_{j+1}\subset
S_{j}\subset N\setminus N^{h}$ for all $j<k$. For $S_{2}\in\pi_{2}$, we must
have
\begin{align*}
\bar{\sigma}(S_{2})  &  =\frac{1}{F(h,N)}\max\left\{  0\text{ , }%
\alpha(d-F(\ell,S_{2}))-\alpha(d-F(\ell,S_{1})\right\} \\
&  =\frac{1}{F(h,N)}\max\{0\text{ , }\alpha F(\ell,S_{1}\setminus S_{2})\}>0
\end{align*}
Thus, the coefficient $w(S_{2})$ in the proof of Proposition
\ref{prop: gen_charact_one_issue} takes the form $F(\ell,S_{1}\setminus
S_{2})/F(h,N)$. By induction,
\begin{equation}
\bar{\sigma}(S_{k})=\alpha\frac{F(\ell,S_{k-1}\setminus S_{k})}{F(h,N)}>0
\label{sigma bar Sk}%
\end{equation}
for every $S_{k}\in\pi_{k}$ such that $S_{k}\subset N\setminus N^{h}$ and
$k=2,\ldots,K$. This completes the characterization of the $\bar{\sigma}(S)$
for every cell $S$ in one of the partitions in $\Pi$. Now, consider
$S=\varnothing$. In this case, we need
\begin{equation}
\alpha\frac{d}{F(h,N)}-\sum_{S\in\Pi\mid S\subseteq N\setminus N^{h}}%
\bar{\sigma}(S)\leq\bar{\sigma}(\varnothing) \label{ineq sigma empty}%
\end{equation}
with equality if $\bar{\sigma}(\varnothing)>0$. Suppose that $K=1$ (such that
$R=0$), and hence only $N\setminus N^{h}$ itself belongs to $\Pi$. Then,
\[
\alpha\frac{d}{F(h,N)}-\bar{\sigma}(N\setminus N^{h})=\alpha\frac
{F(\ell,N\setminus N^{h})}{F(h,N)}>0
\]
which implies that
\[
\bar{\sigma}(\varnothing)=\alpha\frac{F(\ell,N\setminus N^{h})}{F(h,N)}>0
\]
If instead $R\geq1$, given our assumptions on $\Pi$ it means that at least two
subsets $S\subset N\setminus N^{h}$ belong to $\pi_{2}$. In this case,%
\begin{align*}
\sum_{S\in\Pi\mid S\subseteq N\setminus N^{h}}\bar{\sigma}(S)  &  \geq
\frac{\alpha}{F(h,N)}\{d-F(\ell,N\setminus N^{h})\}+\sum_{S\in\pi_{2}\mid
S\subseteq N\setminus N^{h}}\bar{\sigma}(S)\\
&  =\frac{\alpha}{F(h,N)}(d-F(\ell,N\setminus N^{h}))\\
&  +\frac{\alpha}{F(h,N)}\sum_{S\in\pi_{2}\mid S\subseteq N\setminus N^{h}%
}F(\ell,N\setminus\{N^{h}\cup S\})\\
&  \geq\frac{\alpha}{F(h,N)}(d-F(\ell,N\setminus N^{h}))+\frac{\alpha}%
{F(h,N)}F(\ell,N\setminus N^{h})\\
&  =\alpha\frac{d}{F(h,N)}%
\end{align*}
The last inequality follows from the fact that since $\pi_{2}$ is a collection
of disjoint sets whose union is $N\setminus N^{h}$, the summation over
$S\in\pi_{2}\mid S\subseteq N\setminus N^{h}$ counts $F(\ell,S)$ at least
once, and potentially more than once. Plugging this inequality in
(\ref{ineq sigma empty}), we obtain $\bar{\sigma}(\varnothing)=0$. It remains
to calculate $\alpha$. Let's start from the case of $R=0$. Plugging the
expressions we obtained for $\bar{\sigma}(N\setminus N^{h})$ and $\bar{\sigma
}(\varnothing)$, we have
\[
1=\alpha+\bar{\sigma}(N\setminus N^{h})+\bar{\sigma}(\varnothing
)=\alpha\left\{  1+\frac{d-F(\ell,N\setminus N^{h})}{F(h,N)}+\frac
{F(\ell,N\setminus N^{h})}{F(h,N)}\right\}
\]
such that%
\[
\alpha=\frac{F(h,N)}{F(\ell,N)}%
\]
For all other cases (where $R>0$), we saw that $\bar{\sigma}(\varnothing)=0$.
For every $S_{k}\in\pi_{k}$ such that $S_{k}\subset N\setminus N^{h}$, let
$S_{k-1}$ be again the antecedent of $S_{k}$ in the chain $\{S_{j}%
\}_{j=1}^{k-1}$ that we used in the construction above. For every $S\in\pi
_{k}$, let $P(S)$ be the unique cell $S^{\prime}\in\pi_{k-1}$ such that
$S\subseteq S^{\prime}$. Given this, and plugging (\ref{sigma bar Sk}), we
have%
\begin{align*}
1  &  =\alpha+\sum_{S\in\Pi\mid S\subseteq N\setminus N^{h}}\bar{\sigma}(S)\\
&  =\frac{\alpha}{F(h,N)}\left\{  F(h,N)+d-F(\ell,N\setminus N^{h})+\sum
_{k=2}^{K}\sum_{S\in\pi_{k}\mid S\subset N\setminus N^{h}}F(\ell,P(S)\setminus
S)\right\} \\
&  =\frac{\alpha}{F(h,N)}\left\{  F(\ell,N^{h})+\sum_{k=2}^{K}\sum_{S\in
\pi_{k}\mid S\subset N\setminus N^{h}}F(\ell,P(S)\setminus S)\right\}
\end{align*}
To further simplify this expression, we now use the assumption that each cell
in $\pi_{k-1}$ has exactly $r_{k}$ subsets in $\pi_{k}$. Using this, we can
rewrite the last condition as
\begin{align*}
1  &  =\frac{\alpha}{F(h,N)}\left\{  F(\ell,N^{h})+F(\ell,N\setminus
N^{h})\sum_{k=2}^{K}(r_{k}-1))\right\} \\
&  =\frac{\alpha}{F(h,N)}\left\{  F(\ell,N^{h})+F(\ell,N\setminus N^{h})\cdot
R\right\}
\end{align*}
such that%
\[
\alpha=\frac{F(h,N)}{F(\ell,N^{h})+F(\ell,N\setminus N^{h})\cdot R}%
\]
This completes the proof.\bigskip

\noindent\textbf{Proof of Proposition \ref{prop: eq_rich_domain}\medskip}

\noindent Let $\sigma$ be the unique essential equilibrium. Since
$F(\ell,N)\geq F(\ell,N^{h})>F(h,N)$, Proposition
\ref{prop: gen_charact_one_issue} implies that $\sigma(h,N^{h},\{0\})=\alpha
\in(0,1)$. Let us now activate the algorithm described in the proof of
Proposition \ref{prop: gen_charact_one_issue}. Recall that we can identify all
the platforms in which $a=\ell$ with the narrative $S\subseteq N\setminus
N^{h}$. Therefore, from now on we will use the abbreviated notation
$\bar{\sigma}(S)=\sigma(\ell,C,S)$ from the proof of Proposition
\ref{prop: gen_charact_one_issue}. Since $F(\ell,N^{h})>F(h,N)$, we have
\[
\bar{\sigma}(N\setminus N^{h})=\alpha\cdot\frac{F(\ell,N^{h})-F(h,N)}%
{F(h,N)}>0
\]
Now consider a narrative $S=N\setminus(N^{h}\cup\{i\})$ for some $i\in
N\setminus N^{h}$. The rich domain assumption means that $S\in\mathcal{S}$. By
definition, $S\nsubseteq S^{\prime}$ for any $S^{\prime}\neq S$ such that
$S^{\prime}\subset N\setminus N^{h}$. Therefore, if $\bar{\sigma}(S)=0$, then
the following inequality must hold:%
\[
\alpha\cdot\frac{d-F(\ell,S)}{F(h,N)}=\alpha\cdot\frac{F(\ell,N^{h}%
\cup\{i\})-F(h,N)}{F(h,N)}\leq\bar{\sigma}(N\setminus N^{h})
\]
which is a contradiction since%
\[
F(\ell,N^{h}\cup\{i\})=F(\ell,N^{h})+f_{i}(\ell)>F(\ell,N^{h}).
\]
It follows that $\bar{\sigma}(N\setminus(N^{h}\cup\{i\}))>0$ for every $i\in
N\setminus N^{h}$, and given by
\[
\bar{\sigma}(N\setminus(N^{h}\cup\{i\}))=\alpha\cdot\frac{f_{i}(\ell)}{F(h,N)}%
\]
We will now show that the support of $\bar{\sigma}$ contains no other
narratives. Assume the contrary. Denote $\left\vert N\setminus N^{h}%
\right\vert =s^{\ast}$. Let $S\subset N\setminus N^{h}$ such that $\left\vert
S\right\vert =s<s^{\ast}-1$. In particular, select $S$ to be a maximal set in
this class---i.e., there exist no other such $S$ that contains it. Then, if
$\bar{\sigma}(S)>0$, it must satisfy the equation
\[
\alpha\cdot\frac{d-F(\ell,S)}{F(h,N)}-\bar{\sigma}(N\setminus N^{h}%
)-\sum_{i\in N\setminus(N^{h}\cup S)}\bar{\sigma}(N\setminus(N^{h}%
\cup\{i\}))=\bar{\sigma}(S)
\]
Plugging our expressions for $\bar{\sigma}(N\setminus N^{h})$ and $\bar
{\sigma}(N\setminus(N^{h}\cup\{i\}))$, the L.H.S becomes
\[
\alpha\cdot\left(  \frac{d-F(\ell,S)}{F(h,N)}-\frac{F(\ell,N^{h}%
)-F(h,N)}{F(h,N)}-\frac{F(\ell,N\setminus(N^{h}\cup S))}{F(h,N)}\right)  =0
\]
Thus, we obtain a contradiction. It remains to obtain the exact value of
$\alpha$. By construction,%
\[
\alpha+\bar{\sigma}(N\setminus N^{h})+\sum_{i\in N\setminus N^{h}}\bar{\sigma
}(N\setminus(N^{h}\cup\{i\}))=1
\]
Plugging our expressions for $\bar{\sigma}(N\setminus N^{h})$ and $\bar
{\sigma}(N\setminus(N^{h}\cup\{i\}))$, we obtain $\alpha=F(h,N)/F(\ell,N)$.
This completes the proof.\bigskip

\noindent\textbf{Proof of Proposition \ref{prop: dynamics}\medskip}

\noindent In this proof, we denote platforms by $z$ whenever convenient to
simplify notation. For every $t$, let $\bar{z}_{t}=(\bar{a}_{t},\bar{C}%
_{t},\bar{S}_{t})\in\arg\max_{z}U_{\sigma_{t}}(z)$ be the dominant platform at
period $t$ and let $\overline{U}_{\sigma_{t}}=U_{\sigma_{t}}(\bar{z}_{t})$ be
the payoff it generates. Note that if there exists $T$ such that $\bar{z}%
_{t}\ne(a,C,S)$ for all $t\ge T$, then $\sigma_{t}(a,C,S)\to0$ as $t\to\infty
$. Recall that $U^{\ast}=q\cdot F(h,N^{h})>0$. The proof proceeds
stepwise.\medskip

\begin{step}
\label{stp: lower bound} $\overline{U}_{\sigma_{t}}\ge U^{*}$ for every $t$.
\end{step}

\begin{proof}
Since $\sigma_{1}$ has full support, $\sigma_{t}(h,N^{h},\{0\})>0$ for every
finite $t$; therefore, $\overline{U}_{\sigma_{t}}\geq U_{\sigma_{t}}%
(h,N^{h},\{0\})=U^{\ast}$ for every $t$.\medskip
\end{proof}

\begin{step}
\label{stp: h_platforms} If $\bar{z}_{t}=(h,C,S)$, then $C=N^{h}$ and
$U_{\sigma_{t}}(h,C,S)=U^{*}$.
\end{step}

\begin{proof}
For every platform $(h,C,S)$ such that $C\subset N^{h}$, $U_{\sigma_{t}%
}(h,C,S)<U_{\sigma_{t}} (h,N^{h},\{0\})$ because $Pr_{\sigma_{t}}(y=1\mid
x_{S}(h,C))\leq q$ and $F(h,C)<F(h,N^{h})$. This also implies that
$U_{\sigma_{t}}(h,N^{h},S)\le U^{*}$ for all $S$ and hence the last
equality.\medskip
\end{proof}

\begin{step}
\label{stp: h_infinitely_often} For all $t$, there exists $t^{\prime}>t$ such
that $\bar{z}_{t^{\prime}}=(h,N^{h},S)$ for some $S$.
\end{step}

\begin{proof}
Step \ref{stp: lower bound} implies that
\[
\underset{t\rightarrow\infty}{\lim\inf}\,\overline{U}_{\sigma_{t}}\geq
U^{\ast}.
\]
Suppose there exists $t$ such that~$\bar{z}_{t^{\prime}}=(\ell,\bar
C_{t^{\prime}},\bar S_{t^{\prime}})$ for all $t^{\prime}\geq t$. This implies
that $Pr_{\sigma_{t}}(y=1\mid x_{\bar{S}_{t}}(\bar{a}_{t},\bar{C}%
_{t}))\rightarrow0$, which is inconsistent with $\lim\inf_{t\rightarrow\infty
}\overline{U}_{\sigma_{t}}>0$.\medskip
\end{proof}

\begin{step}
\label{stp: lim_inf} $\lim\inf\,\overline{U}_{\sigma_{t}}=U^{\ast}.$
\end{step}

\begin{proof}
We have already established that $\lim\inf_{t\rightarrow\infty}\overline
{U}_{\sigma_{t}}\geq U^{\ast}.$ Note that, if $\overline{U}_{\sigma_{t}%
}>U^{\ast}$, then $\bar{z}_{t}=(\ell,C,S)$ for some $C$ and $S$, because
$U_{\sigma_{t}}(h,C^{\prime},S^{\prime})\leq U^{\ast}$ for all $C^{\prime}$
and $S^{\prime}$. Now suppose $\lim\inf_{t\rightarrow\infty}\overline
{U}_{\sigma_{t}}>U^{\ast}$. Then, there exists $T$ such that for all $t\geq
T$, $\bar{z}_{t}$ involves policy $a=\ell$. This contradicts Step
\ref{stp: h_infinitely_often}.\medskip
\end{proof}

Recall that
\[
Pr_{\sigma_{t}}(y=1\mid x_{S}(a,C))=q\cdot\frac{\sum_{C^{\prime},S^{\prime
}\mid x_{S}(h,C^{\prime})=x_{S}(a,C)}\sigma_{t}(h,C^{\prime},S^{\prime})}%
{\sum_{a^{\prime},C^{\prime},S^{\prime}\mid x_{S}(a^{\prime},C^{\prime}%
)=x_{S}(a,C)}\sigma_{t}(a^{\prime},C^{\prime},S^{\prime})}\medskip
\]

\begin{step}
\label{stp: upward_trend} If $\bar{z}_{t}=(h,N^{h},\hat{S})$ and
$x_{S}(h,N^{h})=x_{S}(\ell,C)$, then
\[
Pr_{\sigma_{t+1}}(y=1\mid x_{S}(\ell,C))>Pr_{\sigma_{t}}(y=1\mid x_{S}%
(\ell,C))
\]

\end{step}

\begin{proof}
Given $\bar{z}_{t}=(h,N^{h},\hat{S})$, for every $(\ell,C,S)$ such that
$x_{S}(h,N^{h})=x_{S}(\ell,C)$,%

\begin{align*}
Pr_{\sigma_{t+1}}(y &  =1\mid x_{S}(\ell,C))=q\frac{\frac{1}{t+1}+\frac
{t}{t+1}\sum_{C^{\prime},S^{\prime}\mid x_{S}(h,C^{\prime})=x_{S}(\ell
,C)}\sigma_{t}(h,C^{\prime},S^{\prime})}{\frac{1}{t+1}+\frac{t}{t+1}%
\sum_{a^{\prime},C^{\prime},S^{\prime}\mid x_{S}(a^{\prime},C^{\prime}%
)=x_{S}(\ell,C)}\sigma_{t}(a^{\prime},C^{\prime},S^{\prime})}\\
&  =q\frac{\frac{1}{t}+\sum_{C^{\prime},S^{\prime}\mid x_{S}(h,C^{\prime
})=x_{S}(\ell,C)}\sigma_{t}(h,C^{\prime},S^{\prime})}{\frac{1}{t}%
+\sum_{a^{\prime},C^{\prime},S^{\prime}\mid x_{S}(a^{\prime},C^{\prime}%
)=x_{S}(\ell,C)}\sigma_{t}(a^{\prime},C^{\prime},S^{\prime})}\\
&  >q\frac{\sum_{C^{\prime},S^{\prime}\mid x_{S}(h,C^{\prime})=x_{S}(\ell
,C)}\sigma_{t}(h,C^{\prime},S^{\prime})}{\sum_{a^{\prime},C^{\prime}%
,S^{\prime}\mid x_{S}(a^{\prime},C^{\prime})=x_{S}(\ell,C)}\sigma
_{t}(a^{\prime},C^{\prime},S^{\prime})}\\
&  =Pr_{\sigma_{t}}(y=1\mid x_{S}(\ell,C))
\end{align*}

\end{proof}

\begin{step}
\label{stp: downward_trend} If $\bar{z}_{t}=(\ell,\hat{C},\hat{S})$, then for
every $(\ell,C,S)$,%
\[
Pr_{\sigma_{t+1}}(y=1\mid x_{S}(\ell,C))\leq Pr_{\sigma_{t}}(y=1\mid
x_{S}(\ell,C))
\]
with strict inequality if and only if $x_{S}(\ell,\hat{C})=x_{S}(\ell,C)$.
\end{step}

\begin{proof}
If $\bar{z}_{t}=(\ell,\hat{C},\hat{S})$ and $x_{S}(\ell,\hat{C})\neq
x_{S}(\ell,C)$, then by definition, $Pr_{\sigma_{t+1}}(y=1\mid x_{S}%
(\ell,C))=Pr_{\sigma_{t}}(y=1\mid x_{S}(\ell,C))$. Now suppose that $\bar
{z}_{t}=(\ell,\hat{C},\hat{S})$ and $x_{S}(\ell,\hat{C})=x_{S}(\ell,C)$. Then,%
\begin{align*}
Pr_{\sigma_{t+1}}(y  &  =1\mid x_{S}(\ell,C))=q\frac{\frac{t}{t+1}%
\sum_{C^{\prime},S^{\prime}\mid x_{S}(h,C^{\prime})=x_{S}(\ell,C)}\sigma
_{t}(h,C^{\prime},S^{\prime})}{\frac{1}{t+1}+\frac{t}{t+1}\sum_{a^{\prime
},C^{\prime},S^{\prime}\mid x_{S}(a^{\prime},C^{\prime})=x_{S}(\ell,C)}%
\sigma_{t}(a^{\prime},C^{\prime},S^{\prime})}\\
&  =q\frac{\sum_{C^{\prime},S^{\prime}\mid x_{S}(h,C^{\prime})=x_{S}(\ell
,C)}\sigma_{t}(h,C^{\prime},S^{\prime})}{\frac{1}{t}+\sum_{a^{\prime
},C^{\prime},S^{\prime}\mid x_{S}(a^{\prime},C^{\prime})=x_{S}(\ell,C)}%
\sigma_{t}(a^{\prime},C^{\prime},S^{\prime})}\\
&  <q\frac{\sum_{C^{\prime},S^{\prime}\mid x_{S}(h,C^{\prime})=x_{S}(\ell
,C)}\sigma_{t}(h,C^{\prime},S^{\prime})}{\sum_{a^{\prime},C^{\prime}%
,S^{\prime}\mid x_{S}(a^{\prime},C^{\prime})=x_{S}(\ell,C)}\sigma
_{t}(a^{\prime},C^{\prime},S^{\prime})}\\
&  =Pr_{\sigma_{t}}(y=1\mid x_{S}(\ell,C))\medskip
\end{align*}

\end{proof}

\begin{step}
If $(\ell,C,S)$ is such that $x_{S}(\ell,C)\ne x_{S}(h,N^{h})$, then
$\sigma_{t}(\ell,C,S)\to0$ as $t\to\infty$.
\end{step}

\begin{proof}
Suppose $\sigma_{t}(\ell,C,S)\not \to 0$. Then, there exists a subsequence
such that $\sigma_{t}(\ell,C,S)\to\hat\sigma>0$, which implies that the
denominator of $Pr_{\sigma_{t}}(y=1|x_{S}(\ell,C))$ converges to a strictly
positive number along the subsequence. However, the numerator of
$Pr_{\sigma_{t}}(y=1|x_{S}(\ell,C))$ converges to zero by
Step~\ref{stp: h_platforms}, because $\sigma_{t}(h,C^{\prime},S^{\prime})\to0$
if $x_{S}(h,C^{\prime})=x_{S}(\ell,C)$ and hence $C^{\prime h}$. Therefore,
$U_{\sigma_{t}}(\ell,C,S)\to0$ along the subsequence, which contradicts
$\sigma_{t}(\ell,C,S)\to\hat\sigma>0$.
\end{proof}

\begin{step}
\label{stp: LB_prob} If $(\ell,C,S)$ is such that $x_{S}(\ell,C)=x_{S}%
(h,N^{h})$, then
\[
\underset{t\rightarrow\infty}{\lim\inf}\sum_{C^{\prime},S^{\prime}\mid
x_{S}(h,C^{\prime})=x_{S}(\ell,C)}\sigma_{t}(h,C^{\prime},S^{\prime
})=\underset{t\rightarrow\infty}{\lim\inf}\sum_{S^{\prime}}\sigma_{t}%
(h,N^{h},S^{\prime})\equiv\underline{\sigma}>0
\]

\end{step}

\begin{proof}
The first equality follows because $\sigma_{t}(h,C^{\prime},S^{\prime})\to0$
if $C^{\prime h}$ by Step~\ref{stp: h_platforms} and because $x_{S}%
(\ell,C)=x_{S}(h,N^{h})$. The last inequality is strict because, if
$\underline{\sigma}=0$, there exists a subsequence such that $\sum_{C^{\prime
},S^{\prime}}\sigma_{t}(h,C^{\prime},S^{\prime})\rightarrow0$ and hence
$\sigma_{t}(\ell,C,S)\rightarrow\hat{\sigma}>0$ for some $(\ell,C,S)$ such
that $x_{S}(\ell,C)=x_{S}(h,N^{h})$. However, in this case there exists $T$
such that for all $t\geq T$ in this subsequence the numerator of
$Pr_{\sigma_{t}}(y=1\mid x_{S}(\ell,C))$ becomes arbitrarily small and hence
$U_{\sigma_{t}}(\ell,C,S)<U^{\ast}$, which is inconsistent with $\hat{\sigma
}>0$. \medskip
\end{proof}

\begin{step}
\label{stp: lim_sup} $\lim\sup_{t\rightarrow\infty}\overline{U}_{\sigma_{t}%
}\leq U^{\ast}.$
\end{step}

\begin{proof}
Suppose $\lim\sup_{t\rightarrow\infty}\,\overline{U}_{\sigma_{t}}=\bar
{U}>U^{\ast}$. Let
\[
\bar{P}=\left\{  (\ell,C,S)\mid\lim\sup_{t\rightarrow\infty}\,U_{\sigma_{t}%
}(\ell,C,S)=\bar{U}\right\}  ,
\]
which must be non-empty because the set of platforms is finite. Note that
$(\ell,C,S)\in\bar{P}$ only if $x_{S}(\ell,C)=x_{S}(h,N^{h})$. By finiteness
of $\bar{P}$, there exists a common subsequence, $T$, and $\varepsilon>0$ such
that for all $t^{\prime}\geq T$ in this subsequence $U_{\sigma_{t^{\prime}}%
}(\ell,C,S)\geq U^{\ast}+\varepsilon$ for all $(\ell,C,S)\in\bar{P}$. By Step
\ref{stp: h_infinitely_often}, there must exist a $t>T$ (not necessarily in
the subsequence) such that $\bar{z}_{t}=(h,N^{h},S)$ and hence $\overline
{U}_{\sigma_{t}}=U^{\ast}$. Therefore, $U_{\sigma_{t}}(\ell,C,S)\leq U^{\ast}$
for all $(\ell,C,S)\in\bar{P}$. By Step \ref{stp: upward_trend}, for all
$(\ell,C,S)\in\bar{P}$,%

\begin{align*}
\frac{U_{\sigma_{t+1}}(\ell,C,S)}{U_{\sigma_{t}}(\ell,C,S)}  &  =\frac{\left(
\frac{\frac{1}{t}+\sum_{C^{\prime},S^{\prime}\mid x_{S}(h,C^{\prime}%
)=x_{S}(\ell,C)}\sigma_{t}(h,C^{\prime},S^{\prime})}{\frac{1}{t}%
+\sum_{a^{\prime},C^{\prime},S^{\prime}\mid x_{S}(a^{\prime},C^{\prime}%
)=x_{S}(\ell,C)}\sigma_{t}(a^{\prime},C^{\prime},S^{\prime})}\right)
}{\left(  \frac{\sum_{C^{\prime},S^{\prime}\mid x_{S}(h,C^{\prime})=x_{S}%
(\ell,C)}\sigma_{t}(h,C^{\prime},S^{\prime})}{\sum_{a^{\prime},C^{\prime
},S^{\prime}\mid x_{S}(a^{\prime},C^{\prime})=x_{S}(\ell,C)}\sigma
_{t}(a^{\prime},C^{\prime},S^{\prime})}\right)  }\\
&  <\frac{\left(  \frac{\frac{1}{t}+\sum_{C^{\prime},S^{\prime}\mid
x_{S}(h,C^{\prime})=x_{S}(\ell,C)}\sigma_{t}(h,C^{\prime},S^{\prime})}%
{\sum_{a^{\prime},C^{\prime},S^{\prime}\mid x_{S}(a^{\prime},C^{\prime}%
)=x_{S}(\ell,C)}\sigma_{t}(a^{\prime},C^{\prime},S^{\prime})}\right)
}{\left(  \frac{\sum_{C^{\prime},S^{\prime}\mid x_{S}(h,C^{\prime})=x_{S}%
(\ell,C)}\sigma_{t}(h,C^{\prime},S^{\prime})}{\sum_{a^{\prime},C^{\prime
},S^{\prime}\mid x_{S}(a^{\prime},C^{\prime})=x_{S}(\ell,C)}\sigma
_{t}(a^{\prime},C^{\prime},S^{\prime})}\right)  }\\
&  =\frac{\frac{1}{t}}{\sum_{C^{\prime},S^{\prime}\mid x_{S}(h,C^{\prime
})=x_{S}(\ell,C)}\sigma_{t}(h,C^{\prime},S^{\prime})}+1
\end{align*}
which converges to 1 as $t\rightarrow\infty$ by Step \ref{stp: LB_prob}.
Therefore, for every $\delta>0$, we can pick $T$ large enough such that, for
all $t\geq T$ such that $\bar{z}_{t}=(h,C,S)$,
\[
\frac{U_{\sigma_{t+1}}(\ell,C,S)}{U_{\sigma_{t}}(\ell,C,S)}\leq1+\delta
\]
for all $(\ell,C,S)\in\bar{P}$. Finally, this means that we can also pick $T$
and $t\geq T$ so that $\bar{z}_{t}=(h,C,S)$ and $U_{\sigma_{t+1}}%
(\ell,C,S)<U^{\ast}+\varepsilon$ for all $(\ell,C,S)\in\bar{P}$. Therefore,
$U_{\sigma_{t+k}}(\ell,C,S)<U^{\ast}+\varepsilon$ for all $(\ell,C,S)\in
\bar{P}$ and all $k\geq1$, because by Step \ref{stp: downward_trend} the
payoff of $(\ell,C,S)$ is weakly decreasing when $U_{\sigma_{t}}%
(\ell,C,S)>U^{\ast}$. We, thus, reach a contradiction.\medskip
\end{proof}

Steps~\ref{stp: lim_inf} and \ref{stp: lim_sup} imply that $\lim
_{t\rightarrow\infty}\,\overline{U}_{\sigma_{t}}=U^{\ast}$. Now, denote by
$\Sigma$ the set of limit points of $\sigma_{t}$.\medskip

\begin{step}
All $\sigma\in\Sigma$ must induce the same joint distribution over $(a,C)$,
and this distribution must coincide with the unique equilibrium distribution.
\end{step}

\begin{proof}
Note that $U_{\sigma}(z)$ is continuous in $\sigma$ for all $z$. The previous
conclusion implies that, for every $\sigma\in\Sigma$ and every $z$,
$U_{\sigma}(z)\leq U^{\ast}$, with equality for $z\in Supp(\sigma)$.
Propositions \ref{prop n=2} and \ref{prop: gen_charact_one_issue} established
that every $\sigma$ that satisfies this property induces the same distribution
over $(a,C)$. This completes the proof.\bigskip
\end{proof}

\bibliographystyle{ecta}
\bibliography{false_narravtives}

\end{document}